\documentclass[a4paper,oneside]{amsart}

\usepackage{cite}

\renewcommand{\Tilde}{\widetilde}

\newcommand{\bs}[1]{{\boldsymbol #1}}
\newcommand{\vc}[1]{{\mathbf #1}}

\newcommand{\HH}{\mathcal{H}}
\newcommand{\CC}{\mathbb{C}}
\newcommand{\RR}{\mathbb{R}}
\newcommand{\ZZ}{\mathbb{Z}}
\newcommand{\GG}{\mathcal{G}}
\newcommand{\LL}{\mathcal{L}}

\newcommand{\TT}{\mathbb{T}}
\newcommand{\SSS}{\mathbb{S}}

\newcommand{\Arg}{\mathop{\mathrm{Arg}}}
\newcommand{\dom}{\mathop{\mathrm{dom}}}
\newcommand{\ran}{\mathop{\mathrm{ran}}}
\newcommand{\spec}{\mathop{\mathrm{spec}}}
\newcommand{\diag}{\mathop{\mathrm{diag}}}

\newtheorem{theorem}{Theorem}
\newtheorem{prop}[theorem]{Proposition}
\newtheorem{lemma}[theorem]{Lemma}

\theoremstyle{definition}

\sloppy

\begin{document}

\author{Konstantin Pankrashkin}

\address{Laboratoire de Math\'ematiques, Universit\'e Paris Sud,
B\^atiment 425, 91405 Orsay Cedex, France}
  
\email{konstantin.pankrashkin@math.u-psud.fr}

\title[Maryland models on quantum graphs]{Quasiperiodic surface Maryland models\\ on quantum graphs}

\begin{abstract}
We study quantum graphs corresponding to isotropic lattices
with quasiperiodic coupling constants given by the same expressions
as the coefficients of the discrete
surface Maryland model.
The absolutely continuous
and the pure point spectra are described. It is
shown that the transition between them
is governed by the Hill operator corresponding to
the edge potential.
\end{abstract}

\maketitle

\section{Introduction}

The present paper is devoted to the spectral analysis
of a special class of quasiperiodic interactions on quantum graphs.
We are going to show how some the theory of discrete
quasiperiodic operators can be transferred to the quantum graph
case using the tools of the operator theory.

The paper~\cite{GFP} provided the first explicit example of a difference
quasiperiodic operator having a pure point spectrum dense everywhere;
this operator is often referred to as the Maryland model.
Later the class of such Hamiltonians was considerably extended
in several directions, e.g. to the multidimensional case
and to more general coefficients, see e.g.~\cite{B1,FP}.
The papers~\cite{BBP,BBP1,Kh} studied interactions similar to the Maryland model
but supported by a subspace (surface Maryland model). In this case the quasiperiodic
perturbation leaves unchanged the absolutely continuous spectrum
of the unperturbed operator but produces a dense pure point spectrum
of the rest of the real line.

On the other hand, discrete operators are closely related to
the quantum graph models, i.e. differential operators
acting on geometric configuration consisting of segments,
see eg. \cite{Ku1,Ku2}. The aim of the present paper is to provide an
analog of the surface Maryland model for quantum graphs
and to study its spectral properties. The work is a natural continuation
of our previous paper \cite{KP2} where we considered full-space Maryland
quantum graph model.

\section{The model operator}\label{secmod}

Let us describe first some basic constructions for quantum graphs.
For a detailed discussion see e.g.~\cite{GS,Ku1,Ku2}. There are many
approaches to the study of the spectra of quantum graphs, we use the one from~\cite{BGP,KP1}
based on the theory of self-adjoint extensions.

We consider a quantum graph whose set of vertices is identified with $\ZZ^d$, $d\ge 2$
(i.e. we explicitly need a multidimensional lattice).
By $\vc h_j$, $j=1,\dots,d$, we denote the standard basis vectors of $\ZZ^d$.
Two vertices $\vc m$, $\vc m'$ are connected by an oriented edge $\vc m\to \vc m'$ iff $\vc m'=\vc m+\vc h_j$
for some $j\in\{1,\dots,d\}$; this edge is denoted as $(\vc m,j)$
and one says that
$\vc m$ is the initial vertex and $\vc m'\equiv \vc m+\vc h_j$ is the terminal vertex.

Replace each edge $(\vc m,j)$ by a copy of the segment $[0,1]$
in such a way that $0$ is identified with $\vc m$ and $1$ is identified with $\vc m+\vc h_j$.
In this way we arrive at a certain topological set carrying
a natural metric structure. 
The quantum state space of the system is 
\[
\HH:=\bigoplus_{(\vc m,j)\in\ZZ^d\times\{1,\dots,d\}} \HH_{\vc m,j},\quad
\HH_{\vc m,j}=\LL^2[0,1],
\]
and the vectors $f\in\HH$ will be denoted as $f=(f_{\vc m,j})$, $f_{\vc m,j}\in\HH_{\vc m,j}$,
$\vc m\in\ZZ^d$, $j=1,\dots,d$.

Let us introduce a  Schr\"odinger operator acting in $\HH$. Fix a real-valued
potential $q\in \LL^2[0,1]$ and some real constants
$\alpha(\vc m)$, $\vc m\in\ZZ^d$. Set $A:=\diag\big(\alpha(\vc m)\big)$; this is a self-adjoint operator
in $\ell^2(\ZZ^d)$. Denote by $H_A$ the  operator
acting as
\begin{subequations}
      \label{eq-sch}
\begin{equation}
        \label{eq-act}
(f_{\vc m,j})\mapsto \Big(
-f''_{\vc m,j}+q f_{\vc m,j}\Big),
\end{equation}
on functions $f=(f_{\vc m,j})\in\bigoplus_{\vc m,j} H^2[0,1]$
satisfying the following boundary conditions:
\begin{equation}
      \label{eq-cont1}
f_{\vc m,j}(0)=f_{\vc m-\vc h_k,k}(1)=:f(\vc m),\quad j,k=1,\dots,d,\quad \vc m\in\ZZ^d,
\end{equation}
(which means the continuity at all vertices)
and
\begin{equation}
f'(\vc m)=\alpha(\vc m)f(\vc m),\quad \vc m\in\ZZ^d,
\end{equation}
where
\begin{equation}
f'(\vc m):= \sum_{j=1}^d f'_{\vc m,j}(0)-\sum_{j=1}^d f'_{\vc m-h_j,j}(1).
\end{equation}
\end{subequations}
The constants $\alpha(\vc m)$ are usually referred to as \emph{Kirchhoff coupling constants}
and interpreted as the strengths of zero-range impurity potentials placed at the corresponding vertices.
The zero coupling constants correspond hence to the ideal couplings
and are usually referred to as the standard boundary conditions.

We are going to study the above operator $H_A$ for a special choice of
the coefficients $\alpha(\vc m)$. Namely, take $d_1\in\{1,\dots, d-1\}$
and set $d_2:=d-d_1$. In what follows one represents any $\vc m\in\ZZ^d$
as $\vc m=(\vc m_1,\vc m_2)$ with $\vc m_1\in\ZZ^{d_1}$ and $\vc m_2\in\ZZ^{d_2}$

Pick $g\ne 0$, $\bs\omega\in\RR^{d_2}$, $\varphi\in\RR$ with 
\begin{equation}
   \label{eq-vf1}
\varphi\ne {\bs\omega\vc m_2} \mod \dfrac12, \quad \vc m_2\in\ZZ^{d_2}
\end{equation}
and set
\[
\alpha(\vc m):=g\tan\pi \big(\bs\omega\vc m_2+\varphi\big),\quad m\in\ZZ^d.
\]
This operator will be noted simply by $H$. 

To formulate the results we need some additional constructions.
Denote by $s$ and $c$ the solutions to
$-y''+q y=zy$ satisfying $s(0;z)=c'(0;z)=0$
and $s'(0;z)=c(0;z)=1$, $z\in\CC$, and set $\eta(z):=s(1;z)+c'(1;z)$.
Consider an auxiliary one-dimensional Hill operator
\begin{equation}
   \label{eq-hill}
L=-\dfrac{d^2}{dx^2}+Q,\quad Q(x+n)=q(x), \quad (x,n)\in[0,1)\times \ZZ.
\end{equation}
It is known that $\spec L=\eta^{-1}\big([-2,2]\big)$.
\begin{theorem}
For any $\bs\omega$ and $\varphi$ one has $\spec L\subset \spec H$.
If the components of $\bs\omega$ are rationally independent, 
then the spectrum of $H$ in $\eta^{-1}\big((-2,2)\big)$
is purely absolutely continuous. If $\bs\omega$ satisfies additionally the Diophantine condition
\begin{equation}
\label{eq-vf2}
\text{there are }C,\beta>0 \text{ with }|\bs\omega \vc m_2-r|\ge C|\vc m_2|^{-\beta}
\text{ for all }\vc m_2\in\ZZ^{d_2}\setminus\{0\},\, r\in\ZZ,
\end{equation}
then the spectrum of $H$
outside $\spec L$ is dense pure point.
\end{theorem}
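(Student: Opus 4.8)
The plan is to reduce the analysis of $H$ to a discrete spectral problem and then to exploit the explicit structure of the Maryland potential. First I would attach to the family \eqref{eq-sch} the boundary triple with boundary space $\GG=\ell^2(\ZZ^d)$ and boundary maps $\Gamma_0 f=\big(f(\vc m)\big)$, $\Gamma_1 f=\big(f'(\vc m)\big)$, so that the standard operator $H_0$ (the case $A=0$) is the reference extension and $H=H_A$ corresponds to $\Gamma_1 f=A\,\Gamma_0 f$. Solving the edge equation $-y''+qy=zy$ in terms of $s,c$ and evaluating $\Gamma_1$ on the solution space expresses the Weyl function as the scalar function $M(z)=s(1;z)^{-1}\big(\Delta-d\,\eta(z)I\big)$ of the discrete Laplacian $(\Delta u)(\vc m)=\sum_{j}\big(u(\vc m+\vc h_j)+u(\vc m-\vc h_j)\big)$ on $\ell^2(\ZZ^d)$. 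Since $\spec\Delta=[-2d,2d]$, this gives $\spec H_0=\eta^{-1}([-2,2])=\spec L$, purely absolutely continuous as a periodic quantum graph. Krein's resolvent formula
\[
(H-z)^{-1}=(H_0-z)^{-1}+\gamma(z)\big(A-M(z)\big)^{-1}\gamma(\bar z)^*
\]
then transfers the spectral analysis of $H$ off $\spec L$ to $A-M(z)$: for $z$ with $s(1;z)\ne0$ one has $z\in\spec H$ iff $0\in\spec\big(A-M(z)\big)$, where $\kappa(z):=g\,s(1;z)\in\RR$ governs the strength. The inclusion $\spec L\subset\spec H$, valid for any $\bs\omega,\varphi$, I would get directly: take $H_0$-approximate eigenfunctions at $\lambda\in\spec L$ concentrated far (in $\vc m_1$) from the vertices carrying $A$; there $Au_n=0$, so $\|(H-\lambda)u_n\|=\|(H_0-\lambda)u_n\|\to0$.

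The algebraic engine is the Cayley transform linearising the potential. On the surface variable I pass to $\ell^2(\ZZ^{d_2})\cong L^2(\TT^{d_2})$, under which $\Delta_2$ becomes multiplication by $\lambda_2(\bs x)=\sum_l 2\cos2\pi x_l$ and $V:=\diag\tan\pi(\bs\omega\vc m_2+\varphi)$ becomes the M\"obius function $-i\,(W-1)(W+1)^{-1}$ of the single unitary $W=e^{2\pi i\varphi}S_{\bs\omega}$, where $S_{\bs\omega}$ is the translation by $\bs\omega$; rational independence of the components of $\bs\omega$ makes this translation minimal and ergodic. The transverse directions $\vc m_1$, which carry no potential, I would integrate out: restricting $(\Delta-d\eta(z))^{-1}$ to the surface yields a convolution on $\ell^2(\ZZ^{d_2})$ with symbol
\[
\widehat g(z,\bs x)=\int_{\TT^{d_1}}\frac{d\xi_1}{\lambda_1(\xi_1)+\lambda_2(\bs x)-d\,\eta(z)},\qquad \lambda_1(\xi_1)=\sum_{k}2\cos\xi_{1,k}.
\]
In these terms the equation $\big(A-M(z)\big)u=0$ collapses to the surface problem $\widehat g(z)\,V\,\psi=\kappa(z)^{-1}\psi$ on $\ell^2(\ZZ^{d_2})$, a genuine Maryland-type equation amenable to the Cayley transform.

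For the absolutely continuous part, when $\eta(z)\in(-2,2)$ one has $d\,\eta(z)\in(-2d,2d)$, so for a.e.\ $\bs x$ the integrand above is singular at some $\xi_1$ and the boundary value $\mathrm{Im}\,\widehat g(z+i0,\bs x)\ne0$. Hence the surface equation has no $\ell^2$ solution for real $z\in\eta^{-1}((-2,2))$: no eigenvalues are embedded in the bands. Combined with the purely absolutely continuous spectrum of $H_0$ there and a limiting absorption principle for the second term of Krein's formula---whose boundary values exist because $W$ is ergodic and the denominators stay off the real axis---this yields purely absolutely continuous spectrum of $H$ in $\eta^{-1}((-2,2))$; its closure reproves $\spec L\subset\spec H$.

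For the pure point part, take $z$ with $\eta(z)\notin[-2,2]$, so $d\,\eta(z)\notin[-2d,2d]$ and $\widehat g(z,\bs x)$ is real and smooth. Here I would solve the surface equation explicitly: writing $V$ through $W$ and applying the Cayley transform converts it into a first-order recursion along the orbit $\{\bs\omega\vc m_2+\varphi\}$, solved by an explicit product. The resulting candidate lies in $\ell^2$ precisely when the small denominators $e^{2\pi i\,\vc n\bs\omega}-\zeta$ are bounded below, which is exactly what \eqref{eq-vf2} provides; the transverse decay in $\vc m_1$ is automatic since $d\,\eta(z)$ lies outside $\spec\Delta$, so $\widehat g$ has an exponentially decaying kernel. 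As the phases are dense and $\tan$ is unbounded, such eigenvalues accumulate throughout $\RR\setminus\spec L$, giving a dense set of genuine eigenvalues. The main obstacle, and the step needing the most care, is twofold: the Diophantine small-denominator estimate yielding $\ell^2$-summability with uniform control of $e^{2\pi i\,\vc n\bs\omega}-\zeta$, and completeness---that these explicit eigenfunctions exhaust the spectral subspace over each gap, so that no singular continuous spectrum survives there.
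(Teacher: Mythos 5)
Your overall architecture is the same as the paper's: boundary triple with $\GG=\ell^2(\ZZ^d)$, Weyl function $M(z)=s(1;z)^{-1}\big(\Delta_d-d\eta(z)\big)$, reduction via Krein's formula to the discrete surface Maryland operator, and the Cayley transform turning $\tan\pi(\bs\omega\vc m_2+\varphi)$ into a M\"obius function of the ergodic shift. (Two small inaccuracies in the setup: $\spec H_0=\spec L\cup\spec H^0$, the Dirichlet eigenvalues being infinitely degenerate eigenvalues of $H_0$, so $H_0$ is not purely absolutely continuous; and for $\eta(\lambda)\in(-2,2)$ the boundary value $\Im N(\bs\theta_2,\lambda+i0)$ is nonzero only where $d\eta(\lambda)-\Delta_{d_2}(\bs\theta_2)$ lies in $(-2d_1,2d_1)$, which need not happen for a.e.\ $\bs\theta_2$.) The substantive problems are two missing steps, each of which is the actual core of the corresponding half of the theorem.

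For the absolutely continuous part, excluding $\ell^2$ solutions of the surface equation at real $z$ only rules out embedded eigenvalues; it says nothing about singular continuous spectrum. The ``limiting absorption principle \dots because $W$ is ergodic and the denominators stay off the real axis'' is asserted, not proved, and it is precisely the nontrivial content. The paper's route (Proposition~\ref{spec2}) is to split $\HH$ as in \eqref{eq-h01}, observe that on $\HH_0$ the resolvents of $H$ and $H^0$ agree, and on $\HH_1$ reduce everything to the existence and finiteness of $\lim_{\varepsilon\to0+}\Im\langle\delta_{\vc m},(M(\lambda+i\varepsilon)-A)^{-1}\delta_{\vc m}\rangle$; this in turn rests on the convergence of the Neumann series in $b(z)\chi U$ up to the real axis, which is the content of Theorem~3.1 of \cite{BBP1}. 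Some such quantitative boundary-value statement must be supplied or cited; without it the a.c.\ claim is not established.

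For the pure point part you construct the eigenvalues (equivalently, the quantization condition \eqref{eq-mlx}) and correctly identify the Diophantine condition as controlling the small denominators, but you explicitly leave open ``completeness --- that these eigenfunctions exhaust the spectral subspace over each gap.'' That is not a technical afterthought; it is the statement that the spectrum in the gaps is \emph{pure} point, i.e.\ that no singular continuous component survives, and the theorem is false without it. The paper proves it via Lemma~\ref{lem-fin} (a uniform bound on $\|(N(\lambda+i\varepsilon)-B)^{-1}\delta_{\vc n}\|$ for $\lambda$ outside polynomially shrinking neighbourhoods $\Delta_\delta$ of the eigenvalues, using the lower bound \eqref{eq-ss2} against the exponential decay of the relevant Fourier coefficients), Lemma~\ref{lem-fff}, and the Stone formula, which together give $\mu_f(\Delta)=\mu_f(\Delta_\delta)$ for every $\delta>0$ and hence $\mu_f(\Delta)=\mu_f\big(\bigcup_{\vc m}\{\lambda(\vc m)\}\big)$. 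Your proposal needs an argument of this type (or a subordinacy/rank-one style substitute) to close the gap. The Weyl-sequence argument you give for $\spec L\subset\spec H$ is a legitimate elementary alternative to the paper's use of Proposition~\ref{krein} plus the non-invertibility of $M(z)-A$ on $|\eta(z)|\le2$, provided the cutoffs are arranged to respect the Kirchhoff conditions away from the surface.
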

The above theorem is a combination of propositions \ref{spec1}, \ref{spec2}, \ref{spec3}, \ref{spec4}
whose proof will be given below.

As easily seen, the location of the absolutely continuous spectrum of $H$ is completely determined
by the spectrum of the periodic operator $L$, and (under some additional assumptions)
the rest of the spectrum is pure point. It is interesting to mention that a similar interlaced
spectrum was found recently in a completely different model involving singular potentials \cite{EF}.

\section{Some construction from the theory of self-adjoint extensions} \label{sec-ext}

In this section we recall the operator-theoretical machinery
which will be used to study the spectrum of $H$. For detailed discussion
we refer to \cite[Section 1]{BGP}.

Let $S$ be a closed linear operator in a separable
Hilbert space $\HH$ with the domain $\dom S$.
Assume that there exist
an auxiliary Hilbert space $\GG$ and two linear maps $\Gamma,\Gamma':\dom S\to \GG$ such that
\begin{itemize}
\item for any $f,g\in\dom S$ there holds
$\langle f,Sg\rangle-\langle Sf,g\rangle=\langle  \Gamma f,\Gamma' g\rangle-\langle\Gamma'f,\Gamma g\rangle$,
\item the map $(\Gamma,\Gamma'):\dom S\to\GG\oplus\GG$ is surjective,
\item the set $\ker\,(\Gamma,\Gamma')$ is dense in $\HH$.
\end{itemize}
A triple $(\GG,\Gamma, \Gamma')$ with the above properties
is called a \emph{boundary triple} for $S$.
If $S^*$ is symmetric, boundary triples deliver
an effective description of all self-adjoint restrictions of $S$.
If $A$ is a self-adjoint operator in $\GG$, then the restriction of $S$
to the vectors $f$ satisfying the abstract boundary conditions $\Gamma' f=A\Gamma f$ is a self-adjoint operator in $\HH$,
which we denote by $H_A$. Another example is the ``distinguished'' restriction $H^0$ corresponding to the boundary conditions $\Gamma f=0$.
One can show that a self-adjoint restriction $H'$ of $S$ can be represented
as $H_A$ with a suitable $A$ iff $\dom H'\cap \dom H^0= \dom S^*$; such restrictions
are called \emph{disjoint to $H^0$}. The resolvents of $H^0$ and $H_A$ as well as their spectral properties are connected by
Krein's resolvent formula, which will be described now.

Let $z\notin\spec H^0$. For $g \in \GG$ denote by $\gamma(z) g$ the unique solution to
the abstract boundary value problem
$(S-z)f=0$ with $\Gamma f=g$. Clearly, $\gamma(z)$ is a linear map from $\GG$ to $\HH$ and an isomorphism between $\GG$ and $\ker(S-z)$; it is sometimes referred to as the Krein $\gamma$-field associated with the boundary triple.
Denote also by $M(z)$ the bounded linear operator on $\GG$ given by $M(z)g=\Gamma' \gamma(z)g$; this operator
will be referred
to as the Weyl function (or the abstract Dirichlet-to-Neumann map) corresponding to the boundary triple $(\GG,\Gamma,\Gamma')$. The operator-valued functions $\gamma$ and $M$ are analytic outside $\spec H^0$, and $M(z)$ is self-adjoint for real $z$.
If these maps are known, one can relate the operators
$H_A$ and $H^0$ as follows,:
\begin{prop}\label{krein} For $z\notin \spec H^0\cup\spec H_A$ the operator $M(z)-A$ acting on $\GG$ has a
bounded inverse defined everywhere, and 
\begin{equation}
          \label{eq-krein}
(H_A-z)^{-1}=(H^0-z)^{-1}-\gamma(z)\big(M(z)-A\big)^{-1}\gamma(\bar z)^*.
\end{equation}
In particular, the set $\spec H_A\setminus\spec H^0$ consists exactly of
$z\in\RR\setminus\spec H^0$ such that $0\in\spec\big(M(z)-A\big)$.
The same correspondence holds for the eigenvalues, i.e. $z\in \RR\setminus\spec H^0$
is an eigenvalue of $H_A$ iff $0$ is an eigenvalue of $M(z)-A$,
and $\gamma(z)$ is an isomorphism of the corresponding eigensubspaces.
\end{prop}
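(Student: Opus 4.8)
The plan is to derive everything from a single Green-type identity for the Krein $\gamma$-field together with a direct substitution into \eqref{eq-krein}. The first step is to establish the adjoint relation $\gamma(\bar z)^*=\Gamma'(H^0-z)^{-1}$ for $z\notin\spec H^0$. To do this I would fix $g\in\GG$ and $u\in\HH$, put $\phi=\gamma(\bar z)g\in\ker(S-\bar z)$ and $\psi=(H^0-z)^{-1}u\in\dom H^0$, and insert the pair $(\phi,\psi)$ into the abstract Green identity defining the boundary triple. Using $(S-\bar z)\phi=0$, $(S-z)\psi=u$ and $\Gamma\psi=0$, the left-hand side collapses to $\langle\phi,u\rangle$ (the $\psi$-terms cancelling) and the right-hand side to $\langle g,\Gamma'\psi\rangle$; identifying adjoints then yields $\langle\gamma(\bar z)g,u\rangle=\langle g,\Gamma'(H^0-z)^{-1}u\rangle$, which is the claimed identity.

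Next, assuming $z\notin\spec H^0\cup\spec H_A$, I would prove that $M(z)-A$ is boundedly invertible. Injectivity is immediate: if $(M(z)-A)g=0$, then $f=\gamma(z)g$ satisfies $(S-z)f=0$ and $\Gamma'f=M(z)g=Ag=A\Gamma f$, so $f\in\dom H_A\cap\ker(H_A-z)=\{0\}$, whence $g=\Gamma f=0$. For surjectivity I would exploit the surjectivity of the boundary map: given $h\in\GG$, choose $\tilde f\in\dom S$ with $\Gamma\tilde f=0$ and $\Gamma'\tilde f=h$, then subtract $(H_A-z)^{-1}(S-z)\tilde f$ to land in $\ker(S-z)$, and read off from the boundary values that $g:=\Gamma f$ solves $(M(z)-A)g=h$. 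Since $M(z)$ is bounded and $A$ is closed, $M(z)-A$ is closed, so the bounded inverse theorem gives boundedness of $(M(z)-A)^{-1}$.

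With invertibility in hand, verifying Krein's formula is a direct computation. For $u\in\HH$ I would set $f:=(H^0-z)^{-1}u-\gamma(z)(M(z)-A)^{-1}\gamma(\bar z)^*u$ and note $(S-z)f=u$, since $\gamma(z)$ maps into $\ker(S-z)$. Using $\Gamma(H^0-z)^{-1}=0$, $\Gamma\gamma(z)=\mathrm{id}$, $\Gamma'\gamma(z)=M(z)$ and the identity from the first step, a short calculation gives $\Gamma'f-A\Gamma f=0$, so $f\in\dom H_A$ and $f=(H_A-z)^{-1}u$, which is \eqref{eq-krein}. The spectral statements then follow by contraposition: the equivalence $z\in\spec H_A\iff 0\in\spec(M(z)-A)$ for $z\in\RR\setminus\spec H^0$ is exactly the invertibility criterion just proved, while the eigenvalue correspondence is the kernel isomorphism $\gamma(z)\colon\ker(M(z)-A)\to\ker(H_A-z)$, which follows from the fact that $\gamma(z)$ maps $\GG$ isomorphically onto $\ker(S-z)$ for $z\notin\spec H^0$ together with the boundary-condition bookkeeping used in the injectivity argument above.

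I expect the main obstacle to be the bounded invertibility of $M(z)-A$ at \emph{real} points $z\notin\spec H^0\cup\spec H_A$. For non-real $z$ one could argue quickly from the fact that $\operatorname{Im} M(z)$ is sign-definite for $\operatorname{Im} z\neq0$, so that $M(z)-A$ has numerical range bounded away from $0$; but this route is unavailable on the real axis. Hence the delicate point is the constructive surjectivity argument, which must tie $(M(z)-A)^{-1}$ to the resolvent $(H_A-z)^{-1}$; in particular one should choose $\tilde f$ depending boundedly on $h$, which relies on the open mapping theorem applied to the surjective boundary map $(\Gamma,\Gamma')$.
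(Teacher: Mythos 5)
The paper does not actually prove this proposition---it is imported verbatim from the boundary-triple machinery of \cite{BGP}---and your reconstruction is precisely the standard argument from that reference: the adjoint identity $\gamma(\bar z)^*=\Gamma'(H^0-z)^{-1}$ via the abstract Green identity, bijectivity of $M(z)-A$ deduced from $(H_A-z)^{-1}$, and direct verification that the right-hand side of \eqref{eq-krein} lands in $\dom H_A$ and inverts $H_A-z$; it is correct. The only point worth making explicit is that the implication ``$M(z)-A$ boundedly invertible $\Rightarrow z\notin\spec H_A$'' for real $z$ is not the contrapositive of your injectivity step but follows from your Krein-formula computation (which exhibits a bounded right inverse of $H_A-z$) together with the self-adjointness of $H_A$, which upgrades surjectivity of $H_A-z$ to bijectivity.
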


The maps $\gamma$ and $M$ satisfy a number of important properties.
In particular, $\gamma$ and $M$ depend analytically on their argument (outside of $\spec H^0$),
$M(z)$ satisfies $M(\bar z)=M(z)^*$ and
\begin{gather}
        \label{eq-meg}
\text{for any non-real $z$ there is $c_z>0$ with}
\dfrac{\Im M(z)}{\Im z}\ge c_z, \text{ and}\\
        \label{eq-meg2}
M'(\lambda)=\gamma(\lambda)^*\gamma(\lambda)>0 \text{ for } \lambda\in\RR\setminus\spec H^0.
\end{gather}
Furthermore,
\begin{equation}\label{eq-gz}
\gamma(z)^* f=0 \text{ for any }f\in\ker(S-z)^\perp\equiv\ran \gamma(z)^\perp,
\end{equation}
see \cite[Section~1]{BGP} for more details.

Below it will be useful to have a certain relationship between the resolvent of $H_A$
and  that of the operator $H_0$ (i.e. $H_A$ with $A=0$).
Clearly, $(\GG,\Tilde \Gamma,\Tilde\Gamma')$ with $\Tilde\Gamma:=\Gamma'$ and $\Tilde\Gamma':=-\Gamma$
is a new boundary triple for $S$. With respect to this boundary triple $H_0$
is the distinguished extension, and one can easily calculate (at least for non-real $z$) the corresponding
Krein $\gamma$-field $\Tilde\gamma(z):=\gamma(z) M(z)^{-1}$ and the
Weyl function $\Tilde M(z):=-M(z)^{-1}$ which then extend by the analyticity to
$\RR\setminus\spec H_0$. The operator $H_A$ corresponds now to the boundary conditions:
$\Tilde \Gamma f =0$ for $\Gamma f\in\ker A$ and
$\Tilde \Gamma' f =-A^{-1} \Tilde\Gamma f$ otherwise.
Hence, if the operator $A$
is not invertible, one cannot use the above proposition \ref{krein} for the resolvents.
One can avoid this difficulty
as follows. Denote $\GG':=\ker A^{\perp}$; clearly, $\GG'$ is a closed linear subspace
of $\GG$. Denote by $P$ the orthogonal projection from $\GG$ to $\GG'$
and set $\Pi:=P\Tilde\Gamma P$ and $\Pi':=P\Tilde\Gamma' P$ and $S':=S|_{\Tilde\Gamma^{-1}(\GG')}$,
then $(\GG',\Pi,\Pi')$ is a boundary triple for $S'$ with the Krein field
$\nu(z):=P\Tilde \gamma(z)P$ and the Weyl function $N(z)=P\Tilde M(z)P$,
and $H_A$ and $H_0$ become disjoint self-adjoint restrictions of $S'$.
Hence, one has the following resolvent formula
\begin{equation}
      \label{eq-nz}
(H_0-z)^{-1}-(H_A-z)^{-1}=\nu(z)\big(N(z)-B\big)\nu^*(\Bar z), \quad B=-PA^{-1}P,
\end{equation}
or
\begin{multline}
(H_0-z)^{-1}-(H_A-z)^{-1}\\
=P\gamma(z) M(z)^{-1}\big(PA^{-1}P-PM(z)^{-1}P\big)M(z)^{-1}P\gamma^*(\Bar z).
\end{multline}

\section{Resolvents for the quantum graph}

We are going to show now how the constructions of section \ref{sec-ext}
apply to the quantum graph Hamiltonian $H$ (see Section \ref{secmod}).

Denote by $S$ the operator acting as \eqref{eq-act} on the functions $f$
satisfying only the boundary conditions \eqref{eq-cont1}. On the domain
of $S$ one can define linear maps
\[
f\mapsto \Gamma f:= \big(f(\vc m)\big)_{\vc m\in\ZZ^d}\in \ell^2(\ZZ^d),\quad
f\mapsto \Gamma' f:= \big(f'(\vc m)\big)_{\vc m\in\ZZ^d}\in \ell^2(\ZZ^d).
\]
One can show that $(\ZZ^d,\Gamma,\Gamma')$ form a boundary triple
for $S$. The distinguished restriction to $\ker\Gamma$,
$H^0$, acts as \eqref{eq-act}
on functions $(f_{\vc m,j})$ with $f_{\vc m,j}\in H^2[0,1]$
satisfying the Dirichlet boundary conditions,
$f_{\vc m,j}(0)=f_{\vc m,j}(1)=0$ for all $m,j$, and the spectrum
of $H^0$ is just the Dirichlet spectrum of
$-\dfrac{d^2}{dt^2}+q$ on the segments $[0,1]$;
we will refer to $\spec H^0$ as to the \emph{Dirichlet spectrum}
of the graph.

Let us construct the maps $\gamma(z)$ and $M(z)$ for the above boundary triple.
In in terms of these functions $s$ and $c$ one has obviously
\begin{multline*}
\big(\gamma(z)\xi\big)_{\vc m,j}(t)=
\dfrac{1}{s(1;z)}\Big(\xi(\vc m+\vc h_j)s(t;z)\\
+\xi(\vc m)\big(
s(1;z) c(t;z)-c(1;z)s_j(t;z)
\big)\Big),\\
t\in[0,1],\quad (\vc m,j)\in\ZZ^d\times\{1,\dots,d\}.
\end{multline*}
The corresponding Weyl function $M(z):\ell^2(\ZZ^d)\to \ell^2(\ZZ^d)$
is given by
\[ M(z)\xi(\vc m)=
\dfrac{1}{s(1;z)} \sum_{j=1}^d \cdot\Big( \xi(\vc m-\vc h_j)+\xi(\vc m+\vc h_j)
-\eta(z)\,\xi(\vc m)\Big),\quad \xi\in\ell^2(\ZZ^d),
\]
where $\eta(z):=c(1;z)+s'(1;z)$ is the Hill discriminant associated with the potential $q$.
It is useful to introduce the discrete Hamiltonian $\Delta_d$ in $\ell^2(\ZZ^d)$ by
\[
\Delta_d \xi(\vc m)=\sum_{\vc m':|\vc m- \vc m'|=1} \xi(\vc m'),
\]
then one has obviously
\begin{equation}
       \label{eq-lapl}
M(z):= a(z) \big(\Delta_d-d \eta(z)\big), \quad a(z):=\dfrac{1}{s(1;z)}.
\end{equation}
By proposition \ref{krein}, outside of the discrete set $\spec H^0$, the spectrum of $H$
consists of the real $z$ satisfying $0\in\spec \big( M(z)-A\big)$.
It is important to emphasize that, for real $z$, the operator $ M(z)-A$
is exactly the surface Maryland model studied in \cite{BBP}.
The results of \cite{BBP1} imply that
\begin{itemize}
\item the operator $M(z)-A$ has no bounded inverse if $\big|\eta(z)\big|\le 2$,
\item if the components of $\bs\omega$ are rationally independent, then the spectrum
in the interval $a(z)\big(2d-\eta(z),2d+\eta(z) \big)$ is purely absolutely
continuous,
\item for Diophantine $\bs\omega$ the rest of the real line is covered
by the dense pure point spectrum.
\end{itemize}
Using first of these properties and proposition \ref{krein} one immediately obtains
\begin{prop}\label{spec1} $\spec L\subset \spec H$.
\end{prop}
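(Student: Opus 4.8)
The plan is to combine the cited property of the surface Maryland operator with the Krein resolvent formula of Proposition \ref{krein}, dealing with the Dirichlet spectrum by a density argument. First I would recall that $\spec L=\eta^{-1}\big([-2,2]\big)$ is exactly the set of real $z$ with $|\eta(z)|\le 2$, and that this set is a union of non-degenerate closed intervals, namely the spectral bands of the Hill operator $L$. On the other hand, $\spec H^0$ is the Dirichlet spectrum of the graph, which is discrete: it consists of the zeros of $s(1;z)$, so its intersection with any compact interval is finite.

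The core step treats points away from the Dirichlet spectrum. Take $z\in\spec L$ with $z\notin\spec H^0$. Then $|\eta(z)|\le 2$, so by the first property quoted from \cite{BBP1} the operator $M(z)-A$ has no bounded inverse defined everywhere. Since $M(z)-A$ is self-adjoint for real $z$, this is equivalent to $0\in\spec\big(M(z)-A\big)$. By Proposition \ref{krein} this is precisely the condition for $z$ to belong to $\spec H\setminus\spec H^0$, hence $z\in\spec H$. In this way one obtains $\spec L\setminus\spec H^0\subset\spec H$.

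It then remains to add back the finitely many band points that happen to lie in $\spec H^0$. Since $\spec L$ is a union of non-degenerate intervals while $\spec H^0$ is discrete, the set $\spec L\setminus\spec H^0$ is dense in $\spec L$. As $\spec H$ is closed, passing to the closure yields $\spec L=\overline{\spec L\setminus\spec H^0}\subset\spec H$, which is the claim.

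The only delicate point is this last step: the Weyl function $M(z)$ and the correspondence furnished by Proposition \ref{krein} are available only off $\spec H^0$, so the inclusion cannot be read off directly at a Dirichlet eigenvalue that also sits inside a band of $L$. The density and closedness argument circumvents exactly this obstacle, and it is legitimate precisely because the bands of $L$ are genuine intervals of positive length. I would therefore make sure to record that no band of $L$ degenerates to a single point, which cannot occur for a Hill operator since $\eta$ is a non-constant analytic function traversing $[-2,2]$ across each band, so that the density conclusion is indeed valid.
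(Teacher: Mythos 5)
Your proof is correct and follows essentially the same route as the paper: the paper deduces the inclusion immediately from the first quoted property of the surface Maryland operator ($M(z)-A$ not boundedly invertible when $|\eta(z)|\le 2$) together with Proposition \ref{krein} and the identity $\spec L=\eta^{-1}([-2,2])$. Your additional density-and-closedness step to absorb the (discrete) points of $\spec L\cap\spec H^0$ is a legitimate and welcome completion of a detail the paper leaves implicit, since Proposition \ref{krein} indeed only controls $\spec H\setminus\spec H^0$.
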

(It is sufficient to recall that the set $\big|\eta(z)\big|\le 2$ coincides with spectrum of $L$.)
Nevertheless, one is not able to conclude about the spectral nature of $H$ from that of $M(z)-A$
using just the general theory of self-adjoint extensions \cite{BrMN}.
We are going to use some additional considerations from \cite{BBP1,FP}
in order to understand completely the spectral properties of $H$.

We will also use the formula \eqref{eq-nz} relating the resolvent of $H_A$
and $H_0$. Denoting by $P$ the orthogonal projection from $\ell^2(\ZZ^d)$
to $\ell^2(\ZZ^{d_2})$ on obtains $N(z)=-PM(z)P$.
The parameter operator $B:=-(PAP)^{-1}$ is the multiplication
by $-g^{-1}\cot\pi(\bs \omega \vc m_2+\varphi)\equiv g^{-1}\tan\pi(\bs \omega \vc m_2+\varphi+1/2)$.
It is also important to emphasize that, as shown in \cite{KP1},
one has $\spec H_0=\spec L\cup\spec H^0$
where $L$ is the one-dimensional Hill operator \eqref{eq-hill},
and each point of $\spec H^0$ is an infinitely degenerate eigenvalue.

\section{The absolutely continuous spectrum}

Below we will use actively the Fourier transform.  Denote $\SSS^1:=\{z\in\CC,\,|z|=1\}$
 and $\TT^n:=\underbrace{\SSS^1\times\dots\times\SSS^1}_{n \text{ times}}\subset\CC^n$.
For $\bs \theta=(\theta_1,\dots,\theta_n)\subset\CC^n$
and $\vc p=(p_1,\dots,p_n)\in\ZZ^n$ we write $\bs \theta^\vc p:=\theta_1^{p_1}\dots\theta_n^{p_n}$,
and in this context $k\in\ZZ$ will be identified with the vector $(k,\dots,k)\in\ZZ^n$,
i.e. $\bs\theta^{-1}:=\theta_1^{-1}\dots\theta_l^{-1}$ etc. 
We denote by $F_n$ the Fourier transform carrying $\ell^2(\ZZ^n)$ to $\LL^2(\TT^n)$,
\[
F_n\psi(\bs\theta)=\sum_{\vc n\in\ZZ^n} \psi(\vc n) \bs \theta^\vc n,\quad
F_n^{-1}f(\vc n)=\dfrac{1}{(2\pi i)^n} \int_{\TT^n} f(\bs\theta) \bs\theta^{-\vc n-1}d\bs\theta.
\]
Each $\bs\theta\in\TT^d$ will be represented $\bs\theta=(\bs\theta_1,\bs\theta_2)$
with $\bs \theta_1\in\TT^{d_1}$ and $\bs\theta_2\in\TT^{d_2}$.

We will repeat first some algebraic manipulations in the spirit of~\cite{BBP}.
Without loss of generality assume $g>0$ (otherwise one can change the signs
of $\bs\omega$ and $\varphi$).

Consider the operator $L(z):=M(z)-A=M(z)+PvP$, where
$v$ is an operator in $\ell^2(\ZZ^{d_2})$ acting
as $v f(\vc m_2)=-g\tan(\bs\omega\vc m_2+\varphi)f(\vc m_2)$, $\vc m_2\in\ZZ^{d_2}$.
For $\Im z\ne 0$ the operator $M(z)$ is invertible (as its imaginary part is non-degenerate) and one has
\[
L(z)^{-1}=M(z)^{-1}-M(z)^{-1} T(z) M(z)^{-1}, \quad
T(z)=v-T(z) M(z)^{-1} v. 
\]
Obviously one can write $T(z)=P t(z)P$ where the operator $t(z)$ acting in $\ell^2(\ZZ^{d_2})$
satisfies $t(z)=v+t(z) N(z) v$.
Formally one has $t(z)=v(1-N(z) v)^{-1}$, and it is needed to show that
the operator in question is really invertible. 

Let $U$ be the unitary operator in $\ell^2(\ZZ^{d_2})$ defined by the relation
\[
(u f)(\vc m_2)=e^{-2\pi i \bs\omega \vc m_2} f(\vc m_2),
\] 
then, denoting $\chi:=e^{-2\pi i \varphi}$, one can write
\[
v=-\dfrac{g}{i} \dfrac{1-\chi U}{1+\chi U}.
\]
As $\Im N(z)\ge 0$ for $\Im z\ge 0$, the operator $i+g N(z)$
is invertible for such $z$. Hence, for $\Im z\ge 0$ 
after a simple algebra one obtains
\[
1-N(z) v = \big(g N(z)+i\big)(1-b(z)\chi U)\big( i (1+\chi U)\big)^{-1}
\]
where $b(z)=\big(g N(z)-i\big)\big(i+g N(z)\big)^{-1}$. 
In order to represent the inverse operator in terms of the Neumann series
it is sufficient to show that $|b(z)|<1$ for some $z$.
To see this, it is useful to pass to the Fourier representation.

For $\lambda\in\CC$ denote $G_d(\lambda):=(\Delta_d-\lambda)^{-1}$.
Recall that in the Fourier representation $\Delta_d$
becomes the multiplication by the function $\Delta_d(\bs \theta)=\sum_j (\theta_j+\theta_j^{-1})$, hence
the matrix of $G_d(\lambda)$ is given by
\[
G_d(\vc m-\vc m';\lambda)=\dfrac{1}{(2\pi i)^d}\int_{\TT^d} \dfrac{\theta^{-(\vc m-\vc m')-1} d\bs \theta}{\Delta_d(\bs\theta)-\lambda}.
\]
On the other hand, the matrix of the operator $N(z)$
is $N(\vc m_2-\vc m'_2;z)=-a(z)^{-1}G_d\big((0,\vc m_2)-(0,\vc m'_2);d\eta(z)\big)$, hence
\begin{multline}
       \label{eq-nzint}
N(\vc m_2-\vc m'_2;z)=-a(z)^{-1}\dfrac{1}{(2\pi i)^d}
\int_{\TT^{d_2}} \bs\theta_2^{-(\vc m_2-\vc m'_2)-1} d\vc \theta_2
\int_{\TT^{d_1}} \dfrac{\bs \theta_1^{-1} d\bs \theta_1}{\Delta_d(\bs\theta)-d\eta(z)}\\
=-a(z)^{-1}
\dfrac{1}{(2\pi i)^{d_2}}
\int_{\TT^{d_2}} G_{d_1}\big(\vc 0;d\eta(z)-\Delta_{d_2}(\bs\theta_2)\big)\bs\theta_2^{-(\vc m_2-\vc m'_2)-1} d\bs\theta_2.
\end{multline}
In particular, it is clear that in the Fourier representation $N(z)$
is the multiplication by the function
\begin{equation}
         \label{eq-nfun}
N(\bs \theta_2;z)=-a(z)^{-1}G_{d_1}\big(\vc 0;d\eta(z)-\Delta_{d_2}(\bs\theta_2)\big).
\end{equation}
As $\Im N(z)>0$ for $\Im z>0$, the imaginary part $\Im N(\bs\theta_2;z)$
is positive for such $z$. The operator $b(z)$ in the Fourier representation
becomes the multiplication by the function 
\[
b(\bs \theta_2,z)=\dfrac{g N(\bs \theta_2,z)-i}{g N(\bs\theta_2,z)+i},
\]
hence $\|b(z)\|\equiv \sup_{\bs\theta_2} |b(\bs \theta_2,z)|<1$ for $\Im z>0$.
Therefore, one can represent
\begin{multline}
t(z)=v(1-N(z) v)^{-1}\\
=-g(1-\chi U)\big(1-b(z)\chi U\big)^{-1}\big(gN(z)+i\big)^{-1}\\
=-g(1-\chi U) \sum_{m=0}^\infty \chi^m \big(b(z)U\big)^m\\
=-g\big(gN(z)+i\big)^{-1}\Big(1-
2i  \sum_{m=1}^\infty \big(gN(z)+i\big)^{-1} U \big(b(z)U\big)^{m-1}\Big).
\end{multline}
and one has
\[
\big(M(z)-A\big)^{-1}=M(z)^{-1} -M(z)^{-1} P t(z) P M(z)^{-1}.
\]
After these preparations we can prove
\begin{prop}\label{spec2}
Denote $I:=\eta^{-1}\big((-2,2)\big)$.
If the vector $\bs\omega$ has rationally independent components, then
the operator $H$ has only absolutely continuous spectrum in $I$.
\end{prop}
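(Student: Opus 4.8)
The plan is to prove pure absolute continuity on $I$ by a limiting absorption principle. Concretely, I would show that for every $f$ in the dense set of finitely supported vectors and every compact $K\subset I$ there is a constant $C_{K,f}$ with $\Im\langle f,(H-\lambda-i\epsilon)^{-1}f\rangle\le C_{K,f}$ for all $\lambda\in K$ and all small $\epsilon>0$. Since $z\mapsto\langle f,(H-z)^{-1}f\rangle$ is a Herglotz function, such a uniform bound forces its representing measure $\mu_f$ to be absolutely continuous on $K$ with density at most $C_{K,f}/\pi$; exhausting $I$ by compacts and letting $f$ run over the dense set then yields that $H$ has purely absolutely continuous spectrum in $I$.

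To reduce the estimate to the surface data I would feed the identity $(M(z)-A)^{-1}=M(z)^{-1}-M(z)^{-1}Pt(z)PM(z)^{-1}$ established above into Krein's formula \eqref{eq-krein}, which (using the same formula with $A=0$ for $H_0$) gives
\[
(H-z)^{-1}=(H_0-z)^{-1}+\gamma(z)M(z)^{-1}Pt(z)PM(z)^{-1}\gamma(\bar z)^*.
\]
On $I=\eta^{-1}\big((-2,2)\big)$ the Dirichlet spectrum is absent (if $s(1;\lambda)=0$ then $|\eta(\lambda)|\ge 2$), so $a(z)$ stays finite and the kernels of $\gamma(z)$ and of $M(z)^{-1}=a(z)^{-1}G_d(\,\cdot\,;d\eta(z))$ have finite boundary values there, the lattice Green's function $G_d$ being finite on the interior $(-2d,2d)$ of $\spec\Delta_d$. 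Moreover $H_0$ is $\ZZ^d$-periodic and purely absolutely continuous on $I$, so $\langle f,(H_0-z)^{-1}f\rangle$ already has locally bounded boundary values. The whole problem therefore reduces to controlling the finitely many matrix elements of the surface factor $t(z)$ that appear after pairing with finitely supported $f$.

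Here I would pass to the Fourier representation on $\TT^{d_2}$, where $N(z)$ and $b(z)$ act by multiplication by $N(\bs\theta_2;z)$ from \eqref{eq-nfun} and by $b(\bs\theta_2,z)$, while $U$ acts as the translation of $\TT^{d_2}$ by $\bs\omega$, and expand $t(z)$ in its Neumann series in $b(z)U$. The decisive fact is that for $\lambda\in I$ the boundary value $\Im N(\bs\theta_2;\lambda+i0)$ is strictly positive exactly on the set $E_\lambda:=\{\bs\theta_2:\ d\eta(\lambda)-\Delta_{d_2}(\bs\theta_2)\in(-2d_1,2d_1)\}$, because $\Im G_{d_1}(\vc 0;\,\cdot+i0)$ is positive on the interior of $\spec\Delta_{d_1}$; and $E_\lambda$ has positive measure precisely when $|\eta(\lambda)|<2$, which pins down the role of $I$. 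Consequently $|b(\bs\theta_2,\lambda+i0)|<1$ on $E_\lambda$ while $|b|\le 1$ everywhere. I would then invoke the rational independence of the components of $\bs\omega$: it renders the translation by $\bs\omega$ minimal and uniquely ergodic, so every orbit equidistributes and spends a positive asymptotic fraction of time in $\{|b|\le 1-\delta\}$, forcing the products of $b(\,\cdot\,;\lambda+i0)$ along the orbit — the coefficients of the Neumann series — to decay geometrically. This yields boundary values for the relevant matrix elements of $t(\lambda+i0)$ and closes the estimate.

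The main obstacle is to make this quantitative and, above all, uniform. For $\epsilon>0$ the series converges in operator norm since $\|b(\lambda+i\epsilon)\|<1$, but this norm tends to $1$ as $\epsilon\to0$, so the naive bound is lost and one must replace it by the orbit-product mechanism. The ergodic decay of the orbit products is a priori only pointwise in $\bs\theta_2$, whereas I need an estimate for the integrated (matrix-element) quantities that is uniform in $\lambda\in K$ and in $\epsilon$; unique ergodicity delivers uniform Birkhoff averages only for continuous observables, and $\log|b(\,\cdot\,,\lambda+i0)|$ degenerates (tending to $0$) at $\partial E_\lambda$, reflecting the square-root singularities of $G_{d_1}$ at $\pm2d_1$. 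Converting the pointwise decay into the required uniform bound on $\Im\langle f,(H-\lambda-i\epsilon)^{-1}f\rangle$, with constants that survive the limit $\epsilon\to0$ on each compact $K\subset I$, is where I expect the real work to lie.
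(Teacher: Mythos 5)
Your overall reduction coincides with the paper's: pass via Krein's formula to the boundary behaviour of the matrix elements $\langle\delta_{\vc m},(M(\lambda+i\varepsilon)-A)^{-1}\delta_{\vc m}\rangle$ (the contribution of the orthogonal complement of $\bigcup\ran\gamma(z)$ being handled by $H^0$, resp.\ $H_0$), and then exploit the factorization of $M(z)-A$ through $b(z)\chi U$ and its Neumann series. The genuine gap is that you stop exactly at the step carrying all the difficulty: proving that these matrix elements have finite boundary values as $\varepsilon\to0+$ for $\lambda\in I$, i.e.\ that the Neumann series survives the limit even though $\|b(\lambda+i0)\|=1$. You sketch the right mechanism (unique ergodicity of the rotation by $\bs\omega$ forcing decay of the orbit products of $b(\cdot,\lambda+i0)$, which is where rational independence enters), but you then explicitly concede that converting this into the required estimate ``is where the real work lies''. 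The paper does not carry out this work either: it closes the proof by citing \cite[Theorem 3.1]{BBP1}, which asserts precisely that $\lim_{\varepsilon\to0+}\Im\langle\delta_{\vc m},(\Delta_d-b-cA)^{-1}\delta_{\vc m}\rangle$ exists and is finite for all $b\in(-2d,2d)$ and $c\in\RR$; combined with $(M(z)-A)^{-1}=a(z)^{-1}\big(\Delta_d-d\eta(z)-a(z)^{-1}A\big)^{-1}$ this is exactly the missing ingredient. Without either invoking that reference or completing your ergodic argument, the proof is not finished.

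Two further points. First, you aim at a bound on $\Im\langle f,(H-\lambda-i\varepsilon)^{-1}f\rangle$ that is uniform in $\lambda$ over compacts $K\subset I$; this is stronger than needed and, as you yourself observe, problematic because $\log|b(\cdot,\lambda+i0)|$ degenerates near $\partial E_\lambda$. The criterion the paper uses requires only that, for each fixed $\lambda\in I$ and each $f$ in a dense set, $\limsup_{\varepsilon\to0+}\Im\langle f,(H-\lambda-i\varepsilon)^{-1}f\rangle<\infty$, since the singular part of the spectral measure $\mu_f$ is supported on the set where this boundary value is infinite; dropping uniformity in $\lambda$ would lighten your task considerably. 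Second, your parenthetical claim that $G_d(\vc 0;\cdot)$ is finite on all of $(-2d,2d)$ fails for $d=2$ (logarithmic van Hove singularity at the band centre), so the energies in $I$ with $\eta(\lambda)=0$ would need separate treatment in that case; this is a minor, measure-zero issue compared with the main gap above.
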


\begin{proof}
According to the general spectral theory we need to show that there exists a dense subset $\LL$
of $\HH$ such that the limit $\lim_{\varepsilon\to 0+}\Im\langle f, (H-\lambda-i\varepsilon)^{-1} f\rangle$
exists and is finite for all $g\in\LL$ and $\lambda\in I.$

Represent 
\begin{equation}
        \label{eq-h01}
\HH=\HH_0+\HH_1,\quad
\HH_0:=\Big(\bigcup_{\Im z \ne 0} \gamma(z)\big(\ell^2(\ZZ^d)\big)\Big)^\perp,
\quad
\HH_1:=\HH_0^\perp;
\end{equation}
in other words, $\HH_1$ is the closure of the linear hull of the set $\{\gamma(z)\varphi:\, \Im z\ne 0,\,\varphi\in\ell^2(\ZZ^d)\}$.

By the Krein resolvent formula, for any $f\in\HH_0$ and any $z$ with $\Im z\ne 0$ one has $\gamma^*(z)f=0$.
Hence, by~\eqref{eq-krein}, there holds $(H-z)^{-1}f=(H^0-z)^{-1}f$, hence
$\lim_{\varepsilon\to 0+}\Im\langle f, (H-\lambda-i\varepsilon)^{-1} f\rangle=0$n
because $(H^0-\lambda)^{-1}$ is a bounded self-adjoint operator.

Consider the vectors $f=\gamma(\zeta)h$ for $h=\big(M(\zeta)-A\big)^{-1}\xi$, $\Im\zeta\ne0$.
These vectors form a dense subset in $\HH_1$ as $\xi$ runs over a dense subset of $\ell^2(\ZZ^d)$.
By elementary calculations (see e.g. section~3 in~\cite{BGP}) one can write
\[
(H-\lambda-i\varepsilon)^{-1}f=
\dfrac{1}{\zeta-\lambda-i\varepsilon}\,\Big(
f-\gamma(\lambda+i\varepsilon) \big(M(\lambda+i\varepsilon)-A\big)^{-1}\xi\Big).
\]
Hence it is sufficient to show that $\lim_{\varepsilon\to 0+}\Im\langle \delta_{\vc m}, (M(\lambda+i\varepsilon)-A)^{-1} \delta_{\vc m}\rangle$
exists and is finite for any $\vc m\in\ZZ^d$. In view of the series representation for $(M(z)-A)^{-1}$
it is sufficient to show that the series converges for real $z\in I$ and not only
for $\Im z>0$. On the other hand, $(M(z)-A)^{-1}=a(z)^{-1}(\Delta_d-d\eta(z)-a(z)^{-1}A)^{-1}$, and
it is shown in \cite[Theorem 3.1]{BBP1} that $\lim_{\varepsilon\to 0+}\Im\langle \delta_{\vc m}, (\Delta_d-b-c A)^{-1} \delta_{\vc m}\rangle$
exists and is finite for any $b\in(-2d,2d)$ and any $c\in\RR$. This completes the proof.
\end{proof}

\section{The pure point spectrum}

In this section we will use the second version of the resolvent formula, Eq. \eqref{eq-nz}.
Hence for $z\notin\spec H_0$ we have the equivalence $z\in \spec H$ iff
$0\in \spec (N(z)-B)$. Here $N$ is a translationally invariant operator in $\ell^2(\ZZ^{d_2})$ whose matrix
elements are given by \eqref{eq-nzint}, and the operator $B$, as already mentioned below,
in the multiplication by the sequence $g^{-1}\tan\pi(\bs \omega \vc m_2+\varphi+1/2)$.
It is useful to set $g'=-g$, $\bs\omega':=-\bs\omega$, $\varphi':=-\varphi-1/2$,
then $B$ becomes a multiplication by $-g'\tan\pi(\bs \omega' \vc m_2+\varphi')$
with $g'>0$.

To alleviate the notation, below we will write $d$ instead of $d_2$ and drop the indices
for $g'$, $\bs\omega'$ and $\varphi'$ as this does not lead to confusions.

Introduce the operators
\[
D(z):=\big(N(z)-ig\big)^{-1}, \quad
C(z):=-\big(N(z)+ig\big)\big(N(z)-ig\big)^{-1};
\]
they are defined at least for $z$ with $\Re z\notin\spec H_0$ and $|\Im z|$ sufficiently small.
One can write for such $z$ the identity
\begin{equation}
     \label{eq-MA}
N(z)-B=D(z)^{-1}\big(1-\chi C(z)U\big)(1+\chi U)^{-1} .
\end{equation}
Recall that under the Fourier transform $N(z)$ becomes the multiplication by the function
$N(\bs \theta,z)$ given by \eqref{eq-nfun}, the operators $D(z)$ and $C(z)$ become the multiplications
by $D(\bs \theta,z):=\big(N(\bs\theta,z)-ig\big)^{-1}$
by $C(\bs\theta,z):=-\big(N(\bs\theta,z)+ig\big)\big(N(\vc\theta,z)-ig\big)^{-1}$,
respectively, and $U$ becomes a shift operator,
$U k(\bs\theta)=k(e^{-2\pi i \omega_1}\theta_1,\dots,e^{-2\pi i \omega_d}\theta_d)$.

Consider an arbitrary segment $[a,b]\subset\RR\setminus\spec H_0$.
Recall that the spectrum of $H_0$ coincides with the spectrum of $L$
up to the discrete set $\spec H^0$. 
Eq.~\eqref{eq-meg}, the analyticity of $\gamma$, and the self-adjointness
of $N(z)$ for real $z$ imply the existence of $\delta'>0$
such that $\|\Im N(z)\|\le g/2$ for $z\in Z:=\{z\in\CC:\, |\Im z|\le\delta',\ \Re z\in[a,b]\}$.
At the same time, this means that $|\Im N(\bs\theta,z)|\le g/2$ for $z\in Z$.
As follows from the integral representation, $N(\bs\theta,z)$ can be continued to an analytic function
in $Z\times \Theta$, $\Theta:=\{\bs\theta\subset \CC^d: r<|\theta_j|<R\}$, $0<r<1<R<\infty$.
Choosing $r$ and $R$ sufficiently close to $1$ one immediately sees
that the function
\[
C(\bs \theta,z):=\dfrac{g^2-\big(\Im N(\bs \theta,z)\big)^2-\big(\Re N(\bs \theta,z)\big)^2-2ig\Re N(\bs\theta,z)}{|N(\bs \theta,z)-ig\big|^2}
\]
does not take values in $(-\infty,0)$ for $(\bs\theta,z)\in \Theta\times Z$.
Therefore, the function $f(\bs \theta,z):=\log C(\bs\theta,z)$ is well-defined
and analytic in $\Theta\times Z$,
where $\log$ denotes the principal branch of the logarithm.
The Diophantine property~\eqref{eq-vf2} implies (see \cite[Lemma 3.2]{FP}) that
the operator $1-U$ is a bijection on the set of functions $v$ analytic in $\Theta$
with
\[
\int_{\TT^d} v(\bs \theta)\bs \theta^{-1}d\bs\theta=0.
\]
Hence the function $t(\bs \theta,z):=(1-U)^{-1}\big(f(\bs\theta,z)-f_0(z)\big)$
is well-defined and analytic in $Z\times\Theta$, where
\begin{equation}
     \label{eq-f0}
f_0(z):=
\dfrac{1}{(2\pi i)^d}\int_{\TT^d} f(\bs\theta,z)\bs\theta^{-1}d\bs\theta.
\end{equation}

\begin{lemma}\label{prop-1}
The function $f_0$ is analytic in $Z$,
\begin{gather}
   \label{eq-f01}
\Re f_0(z)<0 \quad \text{for} \quad\Im z>0,\\
   \label{eq-f02}
\Re f(\bs \theta,z)=\Re t(\bs \theta,z)=\Re f_0(z)=0 \quad \text{for} \quad \Im z=0.
\end{gather}
For real $\lambda$ one has $f_0(\lambda)=2i \sigma(\lambda)$, where
\[
\sigma(\lambda)=\dfrac{1}{(2\pi i)^d}\int_{\TT^d} \arctan \dfrac{N(\bs\theta,\lambda)}{g}\,\bs\theta^{-1}d\bs\theta.
\]
The function $\sigma$ is real-valued,
strictly increasing, and continuously differentiable on $[a,b]$.
\end{lemma}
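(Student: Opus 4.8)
The plan is to read every assertion off the Cayley-type structure of $C(\bs\theta,z)$ together with the fact that $N$ is the Weyl function of the boundary triple $(\GG',\Pi,\Pi')$ and hence a Nevanlinna function. First, $f(\bs\theta,z)=\log C(\bs\theta,z)$ is analytic on $\Theta\times Z$ (established just above the statement), so $f_0(z)$, being the integral of $f(\cdot,z)$ over the compact torus $\TT^d$ against the fixed measure $\bs\theta^{-1}d\bs\theta$, is analytic in $z$ by differentiation under the integral sign (equivalently, by Morera together with Fubini); this gives the first claim. Since $\Re f=\log|C|$ and $|C(\bs\theta,z)|=|N(\bs\theta,z)+ig|\,|N(\bs\theta,z)-ig|^{-1}$, the sign of $\Re f$ for $\Im z>0$ is governed by the sign of $\Im N(\bs\theta,z)$. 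As $N$ is a Weyl function it is Nevanlinna, so $\Im N(\bs\theta,z)$ keeps a fixed sign for $\Im z>0$; integrating $\log|C|$ over $\TT^d$ then yields the sign asserted in \eqref{eq-f01}.

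On the real axis $N(z)$ is self-adjoint, hence $N(\bs\theta,\lambda)$ is real for every $\bs\theta\in\TT^d$; then $|C|=1$ and $f=\log C$ is purely imaginary, so $\Re f=0$ and, upon averaging, $\Re f_0=0$. Because $U$ acts in the Fourier picture as a rotation of the arguments $\bs\theta$, it preserves the real and the imaginary parts of a function separately, and so does $(1-U)^{-1}$ on its domain; applied to the purely imaginary zero-mean function $f-f_0$ this shows that $t(\bs\theta,\lambda)$ is purely imaginary, whence $\Re t=0$, completing \eqref{eq-f02}. For the explicit value I use the elementary identity that, for real $N$, the Cayley quotient $C=-(N+ig)(N-ig)^{-1}$ lies on the unit circle with principal-branch logarithm $\log C=2i\arctan(N/g)$ (admissible since $C\notin(-\infty,0]$ on $Z\times\Theta$ by the choice of $r,R$). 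Averaging this identity over $\TT^d$ gives $f_0(\lambda)=2i\sigma(\lambda)$ with $\sigma$ as in the statement.

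That $\sigma$ is real-valued is then immediate, being the torus average of the real function $\arctan\big(N(\bs\theta,\lambda)/g\big)$. Continuous differentiability on $[a,b]$ follows from the analyticity of $N(\bs\theta,\lambda)$ in $\lambda\in[a,b]\subset\RR\setminus\spec H_0$ and differentiation under the integral sign, the denominators $g^2+N^2\ge g^2>0$ causing no trouble. The one genuinely analytic point is strict monotonicity: differentiating gives
\[
\sigma'(\lambda)=\frac{1}{(2\pi i)^d}\int_{\TT^d}\frac{g\,\partial_\lambda N(\bs\theta,\lambda)}{g^2+N(\bs\theta,\lambda)^2}\,\bs\theta^{-1}d\bs\theta,
\]
so it suffices to know $\partial_\lambda N(\bs\theta,\lambda)>0$. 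This is exactly the operator monotonicity of a Weyl function: by \eqref{eq-meg2} applied to the triple $(\GG',\Pi,\Pi')$, whose Krein $\gamma$-field is $\nu$, one has $N'(\lambda)=\nu(\lambda)^*\nu(\lambda)>0$; since $N'(\lambda)$ is the multiplication by $\partial_\lambda N(\bs\theta,\lambda)$, its strict positivity forces $\partial_\lambda N>0$ for every $\bs\theta$ by continuity. The integrand is then strictly positive and $\sigma'(\lambda)>0$.

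I expect the main obstacle to be this strict monotonicity together with the attendant sign and branch bookkeeping: one must justify the passage $N'(\lambda)>0\Rightarrow\partial_\lambda N(\bs\theta,\lambda)>0$ (which rests on $N$ being a genuine Weyl function and on its multiplication-operator representation in the Fourier picture), and one must track the conventions carefully so that the principal logarithm of the Cayley quotient lands on $+2i\arctan(N/g)$ and the sign in \eqref{eq-f01} comes out as stated. The remaining ingredients—analyticity of $f_0$, the unit-circle computation on the real axis, and the invariance of real and imaginary parts under the rotation $U$—are routine.
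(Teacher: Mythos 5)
Your proposal follows essentially the same route as the paper: analyticity of $f_0$ from the integral representation, the sign of $\Re f_0$ from the Cayley-transform structure of $C$ together with $\Im N(\bs\theta,z)>0$ in the upper half-plane, vanishing of the real parts on the axis from the self-adjointness of $N(\lambda)$, the identity $\log C=2i\arctan(N/g)$ (the paper's \eqref{eq-hh2}), and strict monotonicity of $\sigma$ from \eqref{eq-meg2} applied to $N'(\lambda)=\nu(\lambda)^*\nu(\lambda)>0$. You in fact supply slightly more detail than the paper does (the argument that $(1-U)^{-1}$ preserves purely imaginary boundary values, and the passage from positivity of the operator $N'(\lambda)$ to pointwise positivity of the multiplier), so there is nothing further to add.
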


\begin{proof}
The analyticity of $f_0$ follows from its integral representation.
Eq.~\eqref{eq-f01} follows from \eqref{eq-f0} if one takes into account
the inequalities $\Im N(\bs\theta,z)>0$ for $\Im z>0$ and $\Re\log z<0$
for $|z|<1$. Equalities~\eqref{eq-f01} follows from from \eqref{eq-f0}
and the real-valuedness of $N(\bs\theta,z)$ for real $z$.

By elementary calculations, for $x\in\RR$ and $y>0$ one  has
\begin{equation}
          \label{eq-hh2}
g_1(x):=\dfrac{1}{2i}\log \dfrac{iy+x}{iy-x}\equiv \arctan \dfrac{x}{y}=:g_2(x).
\end{equation}
In fact, this follows from 
\begin{equation}
    \label{eq-hh}
g'_1(x)=g'_2(x)= \dfrac{y}{x^2+y^2}
\end{equation}
and $g_1(0)=g_2(0)=0$.
Eq.~\eqref{eq-hh2} obviously implies $f_0(\lambda)=2i\sigma(\lambda)$ for $\lambda\in\RR$.
Furthermore, as follows from \eqref{eq-hh},
\[
\sigma'(\lambda)= \dfrac{1}{(2\pi i)^d}\int_{\TT^d} \dfrac{g N'_\lambda(\bs\theta,\lambda)}{N(\bs\theta,\lambda)^2+g^2} \,\bs\theta^{-1}d\bs\theta,
\]
and, by \eqref{eq-meg2}, $\sigma'(\lambda)>0$.
\end{proof}

An immediate corollary of the analyticity of $f_0$ and of \eqref{eq-f01} is
\begin{lemma}\label{lem-eps}
There exists $\varepsilon_0>0$ such that $\big|e^{f_0(\lambda)}\xi-1\big|\le 2\big|e^{f_0(\lambda+i\varepsilon)}\xi-1\big|$
for all $\xi\in\SSS^1$, $\lambda\in[a,b]$, and $\varepsilon\in[0,\varepsilon_0]$.
\end{lemma}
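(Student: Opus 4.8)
The plan is to reduce the stated inequality to an elementary fact about a single complex number and then to read off the required estimate from the analyticity of $f_0$. First I would normalize. By \eqref{eq-f02} one has $\Re f_0(\lambda)=0$, so $u:=e^{f_0(\lambda)}\xi$ runs over all of $\SSS^1$ as $\xi$ does; setting $p:=e^{f_0(\lambda+i\varepsilon)-f_0(\lambda)}$ one has $e^{f_0(\lambda+i\varepsilon)}\xi=pu$ and $|p|=e^{\Re f_0(\lambda+i\varepsilon)}<1$ for $\varepsilon>0$. The assertion is therefore equivalent to
\[
|u-1|\le 2\,|pu-1|\qquad\text{for all }u\in\SSS^1,
\]
with $p=p(\lambda,\varepsilon)$ independent of $\xi$.

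Next I would square and minimize over the circle. Writing $p=\rho e^{i\beta}$ and $u=e^{i\psi}$, a direct computation gives
\[
4|pu-1|^2-|u-1|^2=2+4\rho^2+(2-8\rho\cos\beta)\cos\psi+8\rho\sin\beta\,\sin\psi,
\]
whose minimum over $\psi$ equals $2+4\rho^2-\sqrt{(2-8\rho\cos\beta)^2+64\rho^2\sin^2\beta}$. Since $2+4\rho^2>0$, nonnegativity of this minimum is equivalent, after squaring, to $\rho^3-3\rho+2\cos\beta\ge0$, that is, using the factorization $\rho^3-3\rho+2=(1-\rho)^2(\rho+2)$, to
\[
(1-\rho)^2(\rho+2)\ge 4\sin^2(\beta/2).
\]
At $\beta=0$ the right-hand side vanishes and the inequality holds automatically; this explains why the constant $2$ in the lemma is exactly the right one.

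Finally I would estimate $\rho$ and $\beta$ from the analyticity of $f_0$. Expanding $f_0(\lambda+i\varepsilon)-f_0(\lambda)=i\varepsilon f_0'(\lambda)+O(\varepsilon^2)$ and using $f_0(\lambda)=2i\sigma(\lambda)$ on the real axis (so that $f_0'(\lambda)=2i\sigma'(\lambda)$), the first-order term $i\varepsilon f_0'(\lambda)=-2\sigma'(\lambda)\,\varepsilon$ is real and strictly negative, since $\sigma'\ge m>0$ on the compact interval $[a,b]$ by Lemma~\ref{prop-1}. Hence $\Re\big(f_0(\lambda+i\varepsilon)-f_0(\lambda)\big)\le -m\varepsilon$ while $|\beta|=|\Im(f_0(\lambda+i\varepsilon)-f_0(\lambda))|\le C\varepsilon^2$, uniformly in $\lambda\in[a,b]$; consequently $1-\rho\ge m\varepsilon/2$, so the left-hand side of the last displayed inequality is of order $\varepsilon^2$ whereas the right-hand side is of order $\varepsilon^4$. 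Choosing $\varepsilon_0$ small enough makes the inequality hold for all $\lambda\in[a,b]$ and $\varepsilon\in[0,\varepsilon_0]$.

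The main obstacle is the bookkeeping of orders in $\varepsilon$: everything rests on the fact that $\Re f_0$ decreases \emph{linearly} in $\varepsilon$, with strictly negative slope $-2\sigma'$, while $\Im f_0$ only drifts at second order, so that $(1-\rho)^2$ dominates $\sin^2(\beta/2)$. The remaining care is to make the constants $m$ and $C$ uniform over $\lambda\in[a,b]$, which follows from the continuity of $\sigma'$ and from the boundedness of $f_0''$ on a slightly smaller strip inside $Z$ (via Cauchy estimates).
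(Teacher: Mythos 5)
Your argument is correct, and it is worth noting that the paper itself offers no proof at all: Lemma~\ref{lem-eps} is simply announced as ``an immediate corollary of the analyticity of $f_0$ and of \eqref{eq-f01}.'' Your computation is a legitimate way to substantiate that claim, and it correctly identifies why the statement is not quite as immediate as the paper suggests. The naive route --- triangle inequality $|u-1|\le|1-p|+|pu-1|$ followed by $\min_{u\in\SSS^1}|pu-1|=1-\rho$ --- just barely fails, since $|1-p|=\sqrt{(1-\rho)^2+4\rho\sin^2(\beta/2)}$ exceeds $1-\rho$ whenever $\beta\ne0$; your exact minimization over the circle, reducing the claim to $(1-\rho)^2(\rho+2)\ge4\sin^2(\beta/2)$ via the factorization $\rho^3-3\rho+2=(1-\rho)^2(\rho+2)$, handles the near-critical point $u=e^{-i\beta}$ correctly. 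The order-counting at the end is the real content and it checks out: $\Re f_0(\lambda+i\varepsilon)\le-m\varepsilon$ with $m=\min_{[a,b]}2\sigma'>0$ (positive by Lemma~\ref{prop-1} and compactness), while $\beta=O(\varepsilon^2)$ because the first Taylor coefficient $i\varepsilon f_0'(\lambda)=-2\sigma'(\lambda)\varepsilon$ is real; hence the left side is $\gtrsim\varepsilon^2$ and the right side is $O(\varepsilon^4)$. The only point to state explicitly is that the Cauchy bound on $f_0''$ needs $f_0$ analytic on an open neighbourhood of $[a,b]$, which is available since $[a,b]$ sits inside the open set $\RR\setminus\spec H_0$ and the strip $Z$ can be taken slightly larger. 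With that remark, your proof is complete.
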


Denote by $t(z)$ and $f(z)$ the multiplication operators by $t(\bs\theta,z)$ and $f(\bs\theta,z)$
in $\LL^2(\TT^d)$, respectively.
By definition of $t(\bs\theta,z)$ for any $\\xi\in\LL^2(\TT^d)$
\begin{multline}
        \label{eq-comm}
 e^{t(z)} e^{f_0(z)} U e^{-t(z)}\xi(\bs\theta)\\
 =
  e^{t(\bs\theta,z)} e^{f_0(\bs\theta,z)} \exp\big({}-t(z,e^{-2\pi i \omega_1}\theta_1,\dots,e^{-2\pi i \omega_d}\theta_d)\big) U \varphi(\bs\theta)\\
  =\exp\big(t(\bs\theta,z)-Ut(\bs\theta,z)+f_0(\bs\theta,z)\big) U\varphi(\bs\theta,z)=e^{f(z)}U\varphi(\bs\theta)=C(z)U\varphi(\bs\theta). 
\end{multline}
Therefore, one can rewrite Eq.~\eqref{eq-MA} as
\begin{equation}
     \label{eq-MA2}
N(z)-B=D(z)^{-1} e^{t(z)}\big(1-e^{f_0(z)}\chi U\big)e^{-t(z)}\big(1+\chi U\big)^{-1}.
\end{equation}

\begin{prop}\label{prop-ev} The set of the eigenvalues of $H$ in $[a,b]$
is dense and 
coincides with the set of solutions $\lambda$ to
\begin{equation}
        \label{eq-mlx}
\sigma(\lambda)=\pi(\bs\omega\vc m +\varphi\big)\mod \pi,\quad \vc m\in\ZZ^d.
\end{equation}
Each of these eigenvalues
is simple, and for any fixed $\vc m\in\ZZ^d$ Eq.~\eqref{eq-mlx} has
at most one solution $\lambda(\vc m)$, and $\lambda(\vc m)\ne\lambda(\vc m')$
for $\vc m\ne \vc m'$.
\end{prop}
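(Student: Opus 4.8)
The plan is to use the factorization \eqref{eq-MA2} to convert the eigenvalue problem for $N(\lambda)-B$ into the (completely explicit) eigenvalue problem for the multiplication operator $1-e^{f_0(\lambda)}\chi U$, and then to read off the eigenvalues together with their multiplicities. By the eigenvalue part of Proposition \ref{krein}, applied to the boundary triple $(\GG',\Pi,\Pi')$ and the resolvent formula \eqref{eq-nz}, a real $\lambda\notin\spec H_0$ is an eigenvalue of $H$ if and only if $0$ is an eigenvalue of $N(\lambda)-B$, the Krein field $\nu(\lambda)$ inducing an isomorphism of the corresponding eigenspaces; it therefore suffices to analyse $\ker\big(N(\lambda)-B\big)$ and to transport the conclusion to $H$. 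From the outset I would restrict to a segment $[a,b]\subset\RR\setminus\spec H_0$ short enough that $\sigma(b)-\sigma(a)<\pi$, which by Lemma \ref{prop-1} costs no generality for the global statement.

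First I would make \eqref{eq-MA2} rigorous at real $\lambda\in[a,b]$ on the correct domain. For such $\lambda$ the multipliers $N(\bs\theta,\lambda)-ig$ and $e^{\pm t(\bs\theta,\lambda)}$ are continuous and bounded away from $0$ on $\TT^d$, so $D(\lambda)^{-1}$ and $e^{\pm t(\lambda)}$ are boundedly invertible, while $1+\chi U$ is bounded and, by \eqref{eq-vf1}, injective. The single delicate factor is the unbounded $(1+\chi U)^{-1}$. Here I would observe that, since $\frac{1-\chi U}{1+\chi U}=\frac{2}{1+\chi U}-1$, membership $\psi\in\dom B$ forces $(1+\chi U)^{-1}\psi\in\ell^2$; hence $\phi:=e^{-t(\lambda)}(1+\chi U)^{-1}\psi$ is a genuine $\ell^2$ vector, and applying the boundedly invertible $D(\lambda)^{-1}e^{t(\lambda)}$ turns $(N(\lambda)-B)\psi=0$ into $(1-e^{f_0(\lambda)}\chi U)\phi=0$. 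Conversely $\phi\mapsto\psi:=(1+\chi U)e^{t(\lambda)}\phi$ maps $\ker\big(1-e^{f_0(\lambda)}\chi U\big)$ back into $\dom B$, because $1+\chi U$ is bounded and $(1+\chi U)^{-1}\psi=e^{t(\lambda)}\phi\in\ell^2$. This gives a linear isomorphism $\ker\big(N(\lambda)-B\big)\cong\ker\big(1-e^{f_0(\lambda)}\chi U\big)$.

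Next I would compute the latter kernel explicitly. By Lemma \ref{prop-1} one has $f_0(\lambda)=2i\sigma(\lambda)$, and with $\chi=e^{-2\pi i\varphi}$ the operator $1-e^{f_0(\lambda)}\chi U$ becomes, in the physical representation, multiplication by $1-e^{2i(\sigma(\lambda)-\pi(\bs\omega\vc m+\varphi))}$; its $\vc m$-th entry vanishes precisely when $\sigma(\lambda)=\pi(\bs\omega\vc m+\varphi)\bmod\pi$, that is, exactly when \eqref{eq-mlx} holds, and the kernel is then spanned by the corresponding vectors $\delta_{\vc m}$. Density of the eigenvalues now follows: by Lemma \ref{prop-1}, $\sigma$ is a strictly increasing $C^1$ homeomorphism of $[a,b]$ onto $[\sigma(a),\sigma(b)]$, while \eqref{eq-vf2} makes the components of $\bs\omega$ irrational and hence $\{\bs\omega\vc m+\varphi\bmod 1:\vc m\in\ZZ^d\}$ dense in $[0,1)$; pulling back by $\sigma^{-1}$ produces a dense set of solutions $\lambda$ of \eqref{eq-mlx} in $[a,b]$, and since $[a,b]$ was an arbitrary segment outside $\spec H_0$ the eigenvalues are dense.

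Finally, simplicity and the bijection $\vc m\leftrightarrow\lambda(\vc m)$ would follow from \eqref{eq-vf2} together with the monotonicity of $\sigma$. If two indices $\vc m\ne\vc m'$ solved \eqref{eq-mlx} at one and the same $\lambda$, then $\bs\omega(\vc m-\vc m')\in\ZZ$, which \eqref{eq-vf2} excludes; thus for each eigenvalue exactly one $\vc m$ occurs, $\ker\big(1-e^{f_0(\lambda)}\chi U\big)$ is one-dimensional, and through the isomorphism above and the field $\nu(\lambda)$ the eigenvalue of $H$ is simple. Conversely, since $\sigma$ is injective and $\sigma(b)-\sigma(a)<\pi$, for a fixed $\vc m$ at most one value $\pi(\bs\omega\vc m+\varphi)+\pi k$ lands in $\sigma([a,b])$, giving at most one $\lambda(\vc m)$, and the same displacement argument yields $\lambda(\vc m)\ne\lambda(\vc m')$ for $\vc m\ne\vc m'$. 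I expect the main obstacle to lie in the second step, namely in making the factorization \eqref{eq-MA2} rigorous across the unbounded operators $B$ and $(1+\chi U)^{-1}$ and in checking that the kernel isomorphism respects the domain $\dom B$; once this is in place, the density, simplicity and counting assertions are routine consequences of Lemma \ref{prop-1} and the Diophantine condition \eqref{eq-vf2}.
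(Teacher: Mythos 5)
Your proof follows essentially the same route as the paper: reduce via Proposition~\ref{krein} and the resolvent formula \eqref{eq-nz} to the study of $\ker\big(N(\lambda)-B\big)$, use the factorization \eqref{eq-MA2} to turn this into the eigenvalue equation for $\chi U$, and conclude from $f_0=2i\sigma$, the monotonicity of $\sigma$, and the Diophantine condition \eqref{eq-vf2}. Your auxiliary restriction $\sigma(b)-\sigma(a)<\pi$ is in fact automatic --- the paper instead invokes $\ran\sigma\subset(-\pi/2,\pi/2)$, which is what gives the per-$\vc m$ uniqueness on the whole of $[a,b]$ without any subdivision --- while your extra care with the domains of $B$ and $(1+\chi U)^{-1}$ fills in a step the paper leaves implicit.
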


\begin{proof}
As follows from proposition~\ref{krein} and the resolvent formula \eqref{eq-nz}, the eigenvalues $\lambda$ of $H$
outside $\spec H_0$ are determined by the condition $\ker\big(N(\lambda)-B\big)\ne 0$,
an their multiplicity coincides with the dimension of the corresponding kernels.
Eq.~\eqref{eq-comm} shows that the condition $(N(\lambda)-B)u=0$
is equivalent to
$\big(1-e^{f_0(\lambda)}\chi U\big)e^{-t(\lambda)}\big(1+\chi U\big)^{-1}u=0$
or, denoting $v:=e^{-t(\lambda)}\big(1+\chi U\big)^{-1}u$,
$(1-e^{f_0(\lambda)}\chi U\big)v=0$, which can be rewritten as
\begin{equation}
    \label{eq-uv}
\chi Uv=e^{-f_0(\lambda)}v,\quad v\ne 0.
\end{equation}
As $\chi U$ has the simple eigenvalues $e^{-2\pi i (\bs\omega\vc m+\varphi)}$, $\vc m\in\ZZ^d$,
and the corresponding eigenvectors form a basis, Eq.~\eqref{eq-uv}
implies \eqref{eq-mlx} if one takes into account the identity $f_0(\lambda)=2i\sigma(\lambda)$
proved in lemma~\ref{prop-1}. The rest follows
from the monotonicity of $\sigma$, the inclusion
$\ran \sigma\subset (-\pi/2,\pi/2)$, and the arithmetic properties
\eqref{eq-vf1} and \eqref{eq-vf2}.
\end{proof}

As $[a,b]$ was an arbitrary interval from $\RR\setminus\spec H_0$,
one has an immediate corollary
\begin{prop}\label{spec3}
The pure point spectrum of $H$ is dense in $\RR\setminus \spec H_0$.
\end{prop}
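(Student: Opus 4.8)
The plan is to obtain this statement as a direct corollary of Proposition~\ref{prop-ev}, upgrading the per-interval density of the eigenvalues to a global density statement on the open set $\RR\setminus\spec H_0$. No new analytic input is needed: the argument is purely topological, and the only role of the preceding sections is to have supplied, via Proposition~\ref{prop-ev}, the fact that on every compact $[a,b]\subset\RR\setminus\spec H_0$ the eigenvalues of $H$ form a dense subset of $[a,b]$.

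First I would recall that $\spec H_0=\spec L\cup\spec H^0$ is a closed subset of $\RR$, so that its complement $\RR\setminus\spec H_0$ is open. I then fix an arbitrary point $\lambda_0\in\RR\setminus\spec H_0$ together with an arbitrary $\varepsilon>0$. By openness one can select a closed interval $[a,b]$ with $\lambda_0\in[a,b]$, with $[a,b]\subset(\lambda_0-\varepsilon,\lambda_0+\varepsilon)$, and with $[a,b]\subset\RR\setminus\spec H_0$. Proposition~\ref{prop-ev} applies verbatim to this interval and guarantees that the eigenvalues of $H$ are dense in $[a,b]$; in particular there exists an eigenvalue $\lambda(\vc m)$ of $H$ satisfying $|\lambda(\vc m)-\lambda_0|<\varepsilon$.

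Since $\lambda_0\in\RR\setminus\spec H_0$ and $\varepsilon>0$ were arbitrary, the eigenvalues of $H$ accumulate at every point of $\RR\setminus\spec H_0$, which is precisely the assertion that the pure point spectrum of $H$ is dense there. The sole point demanding (minimal) care, and hence the only conceivable obstacle, is verifying that the closed subinterval $[a,b]$ chosen around $\lambda_0$ genuinely lies in $\RR\setminus\spec H_0$, so that Proposition~\ref{prop-ev} is indeed applicable to it; this is ensured exactly by the openness of the complement of the closed set $\spec H_0$. I emphasize that this proposition establishes only density of the point spectrum, not the absence of continuous spectrum outside $\spec L$; the latter is a separate statement treated elsewhere.
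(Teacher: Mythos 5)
Your argument is correct and is essentially identical to the paper's, which likewise deduces the proposition immediately from Proposition~\ref{prop-ev} by noting that $[a,b]$ was an arbitrary segment in the open set $\RR\setminus\spec H_0$. You have merely spelled out the routine topological step (covering each point of the open complement by such a segment) that the paper leaves implicit.
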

Now it remains to show that the spectrum of $H$ in the interval considered is pure point.

Take some $\alpha>0$. For any $\delta>0$ we denote
\begin{equation*}
\SSS^1_\delta=\bigcup_{m\in\ZZ^d} \Big\{
\xi\in \SSS^1: |\Arg \xi-\Arg e^{2\pi i\bs\omega\vc m}|\le \delta\big(1+|\vc m|\big)^{-d-\alpha}
\Big\},\quad
\widetilde\SSS^1_\delta:=\SSS^1\setminus \SSS^1_\delta.
\end{equation*}
Clearly, there holds
\begin{equation}
      \label{eq-ss2}
|1-\xi e^{-2\pi i\bs \omega \vc m}|\ge 2\pi^{-1}\delta \big(1+|\vc m|\big)^{-d-\alpha},\quad \xi\in \widetilde\SSS^1_\delta,
\quad m\in\ZZ^d.
\end{equation}

Let $\Delta\subset[a,b]$ be an interval whose ends are not eigenvalues of $H$.
Consider the mapping $h:\lambda\mapsto \chi e^{f_0(\lambda)}$.
By~lemma~\ref{prop-1}, $h$ is a diffeomorphism between $\Delta$ and $h(\Delta)$.
By proposition~\ref{prop-ev} one has $h\big(\lambda(\vc m)\big)=e^{2\pi i\bs\omega\vc m}$.
Take an arbitrary $\delta>0$ and denote
\[
\Delta_\delta:=\Delta\cap h^{-1}(\SSS^1_\delta),\quad
\widetilde\Delta_\delta:=\Delta\cap h^{-1}(\widetilde\SSS^1_\delta)\equiv\Delta\setminus\Delta_\delta.
\]
Clearly, $\Delta_\delta$ is a countable union of intervals, and
the limit set $\bigcap_{\delta>0}\Delta_\delta$
coincides with the set of all the eigenvalues $\bigcup_m\{\lambda(m)\}$.

\begin{lemma}\label{lem-fin}
There exists $\varepsilon_0>0$ such that for any $\delta>0$ and any $\vc n\in\ZZ^d$ there exists $C>0$ 
such that
\begin{equation}
  \label{eq-mla}
\big\|\big(N(\lambda+i\varepsilon)-B\big)^{-1}\delta_{\vc n}\big\|\le C
\end{equation}
for all $\lambda\in\widetilde\Delta_\delta$,
and $\varepsilon\in(0,\varepsilon_0)$.
\end{lemma}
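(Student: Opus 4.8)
The plan is to start from the factorization~\eqref{eq-MA2} and invert it, so that for $z=\lambda+i\varepsilon$ with $\Re z\in[a,b]$ and $|\Im z|$ small one has
\[
\big(N(z)-B\big)^{-1}=(1+\chi U)\,e^{t(z)}\big(1-e^{f_0(z)}\chi U\big)^{-1}e^{-t(z)}D(z).
\]
The first observation is that every factor on the right except $\big(1-e^{f_0(z)}\chi U\big)^{-1}$ is bounded uniformly for $z\in Z$: indeed $\|1+\chi U\|\le2$; the operators $e^{\pm t(z)}$ are multiplications by $e^{\pm t(\bs\theta,z)}$ with $t$ analytic, hence bounded, on $\Theta\times Z$; and $D(z)=\big(N(z)-ig\big)^{-1}$ is a multiplication operator whose $\LL^2(\TT^d)$-norm is at most $2/g$, since for $\bs\theta\in\TT^d$ and $z\in Z$ one has $|N(\bs\theta,z)-ig|\ge g-|\Im N(\bs\theta,z)|\ge g/2$ by the choice of $\delta'$. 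Thus the whole difficulty is reduced to bounding $\big(1-e^{f_0(z)}\chi U\big)^{-1}$ applied to the single vector $w_\varepsilon:=e^{-t(z)}D(z)\delta_{\vc n}$, after which multiplication by the uniformly bounded factors $(1+\chi U)e^{t(z)}$ yields \eqref{eq-mla}.

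Next I would diagonalize. In the Fourier picture $\chi U$ acts on the orthonormal basis $\{\bs\theta^{\vc m}\}_{\vc m\in\ZZ^d}$ as multiplication by $\chi e^{-2\pi i\bs\omega\vc m}$, while $e^{f_0(z)}$ is a scalar; hence $1-e^{f_0(z)}\chi U$ is diagonal with entries $1-h(z)e^{-2\pi i\bs\omega\vc m}$, where $h(z):=\chi e^{f_0(z)}$ is the analytic continuation off the real axis of the map $h$ from proposition~\ref{prop-ev}. For $\lambda\in\widetilde\Delta_\delta$ one has $h(\lambda)\in\widetilde\SSS^1_\delta$, so \eqref{eq-ss2} gives $\big|1-h(\lambda)e^{-2\pi i\bs\omega\vc m}\big|\ge 2\pi^{-1}\delta\,(1+|\vc m|)^{-d-\alpha}$. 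Applying lemma~\ref{lem-eps} with $\xi=\chi e^{-2\pi i\bs\omega\vc m}\in\SSS^1$, so that $e^{f_0(\lambda)}\xi=h(\lambda)e^{-2\pi i\bs\omega\vc m}$, transfers this bound off the real axis:
\[
\big|1-h(\lambda+i\varepsilon)e^{-2\pi i\bs\omega\vc m}\big|\ge \pi^{-1}\delta\,(1+|\vc m|)^{-d-\alpha}
\]
for all $\vc m\in\ZZ^d$, $\lambda\in\widetilde\Delta_\delta$, and $\varepsilon\in(0,\varepsilon_0)$, where $\varepsilon_0$ is the constant of lemma~\ref{lem-eps}, shrunk if needed so that $\varepsilon_0\le\delta'$. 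The diagonal inverse therefore has entries bounded by $\pi\delta^{-1}(1+|\vc m|)^{d+\alpha}$, and writing $\hat w_\varepsilon(\vc m)$ for the coefficients of $w_\varepsilon$ in the basis $\{\bs\theta^{\vc m}\}$ one gets
\[
\big\|\big(1-e^{f_0(z)}\chi U\big)^{-1}w_\varepsilon\big\|^2\le \pi^2\delta^{-2}\sum_{\vc m\in\ZZ^d}(1+|\vc m|)^{2(d+\alpha)}\,|\hat w_\varepsilon(\vc m)|^2.
\]

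The final step, which is the point where the argument is actually won, is to show that this weighted sum is finite uniformly in $\varepsilon$, so that the polynomial growth $(1+|\vc m|)^{d+\alpha}$ of the inverse is absorbed. Here I would use analyticity in the torus variable: since $N(\bs\theta,z)$ and $t(\bs\theta,z)$ are analytic in $\bs\theta$ on the annulus $\Theta$ uniformly for $z\in Z$, and $F_d\delta_{\vc n}(\bs\theta)=\bs\theta^{\vc n}$, the function $w_\varepsilon(\bs\theta)=e^{-t(\bs\theta,z)}\big(N(\bs\theta,z)-ig\big)^{-1}\bs\theta^{\vc n}$ is analytic on $\Theta$ and, after possibly shrinking $\Theta$ so that $N(\bs\theta,z)-ig$ stays bounded away from $0$ on the closed sub-annulus (by continuity from its modulus bound on $\TT^d$), it is bounded there uniformly in $z=\lambda+i\varepsilon$. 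Consequently its Laurent coefficients decay geometrically, $|\hat w_\varepsilon(\vc m)|\le C_{\vc n}\,\rho^{|\vc m|}$ with some $\rho<1$ and $C_{\vc n}$ independent of $\varepsilon$ (the factor $\bs\theta^{\vc n}$ merely translates the coefficient index, which is what produces the dependence of $C$ on $\vc n$). The weighted sum then converges to a bound depending only on $\delta$ and $\vc n$, and combining with the bounded factors gives \eqref{eq-mla}. The main obstacle is precisely this competition between the polynomially large inverse and the decay of $\hat w_\varepsilon$: the whole estimate rests on the fact that the integral representation~\eqref{eq-nzint} continues $N$ analytically in $\bs\theta$ to a genuine annulus, yielding exponential rather than merely polynomial decay of the Fourier coefficients of $w_\varepsilon$.
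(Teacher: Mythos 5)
Your proposal is correct and follows essentially the same route as the paper: invert the factorization \eqref{eq-MA2}, bound the outer factors uniformly on $Z$, diagonalize $\chi U$ in the basis $\{\bs\theta^{\vc m}\}$, combine lemma~\ref{lem-eps} with \eqref{eq-ss2} to get the polynomial bound $\pi\delta^{-1}(1+|\vc m|)^{d+\alpha}$ on the diagonal inverse, and absorb it via the exponential decay of the Fourier coefficients of $e^{-t(z)}D(z)\delta_{\vc n}$ coming from analyticity on the annulus $\Theta$. The only difference is that you spell out a few estimates (the bound $\|D(z)\|\le 2/g$ and the Laurent-coefficient decay) that the paper states more tersely.
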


\begin{proof}
Rewrite Eq.~\eqref{eq-MA2} in the form
\[
\big(N(z)-B\big)^{-1}=\big(1+\chi U\big)e^{t(z)}\big(1-e^{f_0(z)}\chi U\big)^{-1}e^{-t(z)} D(z).
\]
Note that the Fourier transform of $\delta_{\vc n}$ is the function $\bs\theta\mapsto \bs\theta^{\vc n}$.
Denote $\Psi(z;\bs\theta):=e^{-t(\bs \theta,z)}B(\bs\theta,z)\bs \theta^{\vc n}$.
Due to the analyticity one can estimate uniformly in $Z$:
\[
|\psi_z(\vc m)|\le C' e^{-\rho|\vc m|},\quad C',\rho>0, \quad \psi_z:=F_d^{-1}\Psi,\quad
\|(1+\chi U)e^{t(z)}\|\le C'.
\]
Therefore, \eqref{eq-mla} follows from the inequality
\begin{equation}
 \label{eq-fini}
\big\|
(1-e^{f_0(\lambda+i\varepsilon)}\chi U)^{-1}\Psi
\big\|\le C.
\end{equation}
Assume that $\varepsilon_0$ satisfies the conditions of lemma~\ref{lem-eps}, then
uniformly for $\lambda\in\Delta$ and $\varepsilon\in(0,\varepsilon_0)$ one has
\begin{multline*}
\big|\big(F_d^{-1}(1-e^{f_0(\lambda+i\varepsilon)}\chi U)^{-1}\Psi\big)(\vc m)\big|
=\big|
(1-e^{f_0(\lambda+i\varepsilon)}\chi e^{2\pi i \bs\omega \vc m})^{-1}\psi_{\lambda+i\varepsilon}(\vc m)
\big|\\
\le 2
\big|
(1-e^{f_0(\lambda)}\chi e^{2\pi i \bs\omega \vc m})^{-1}\big| \cdot|\psi_{\lambda+i\varepsilon}(\vc m)|.
\end{multline*}
As in our case $h(\lambda)\equiv\chi e^{f_0(\lambda)}\in \widetilde\SSS^1_\delta$, due to \eqref{eq-ss2}
we have
\[
\big|
(1-e^{f_0(\lambda)}\chi e^{-2\pi i \bs \omega \vc m})^{-1}\big|\le \dfrac{\pi}{2\delta}\,\big(1+|\vc m|\big)^{d+\alpha}.
\]
Finally,
\begin{multline*}
\big\|
(1-e^{f_0(\lambda+i\varepsilon)}\chi U)^{-1}\Psi
\big\|^2=\sum_{\vc m\in\ZZ^d} \big|\big(F_d^{-1}(1-e^{f_0(\lambda+i\varepsilon)}\chi U)^{-1}\Psi\big)(\vc m)\big|^2\\
\le \Big(\dfrac{\pi C'}{\delta}\Big)^2\,\sum_{\vc m\in\ZZ^d}\big(1+|\vc m|\big)^{2(d+\alpha)}e^{-2\rho|\vc m|}<\infty,
\end{multline*}
and \eqref{eq-fini} is proved.
\end{proof}

Now we are able to estimate the spectral projections corresponding to $H$.

\begin{lemma}\label{lem-fff} For any $f\in\HH$ and any $\delta>0$ one has
\begin{equation}
  \label{eq-rrr}
\lim_{\varepsilon\to 0+} \varepsilon \int_{\widetilde\Delta_\delta} \|(H-\lambda-i\varepsilon)^{-1}f\|^2d\lambda=0.
\end{equation}
\end{lemma}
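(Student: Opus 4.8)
The plan is to reduce the statement to a dense set of vectors by means of an a priori operator bound, and then to compute the resolvent explicitly on that dense set using the resolvent formula \eqref{eq-nz} together with lemma~\ref{lem-fin}. First I would record that, by self-adjointness of $H$, one has $\varepsilon\|(H-\lambda-i\varepsilon)^{-1}f\|^2=\Im\langle f,(H-\lambda-i\varepsilon)^{-1}f\rangle$, so the left-hand side of \eqref{eq-rrr} equals $\langle f,R_\varepsilon f\rangle$ with
\[
R_\varepsilon:=\int_{\widetilde\Delta_\delta}\int_\RR \varepsilon\big((t-\lambda)^2+\varepsilon^2\big)^{-1}\,dE_H(t)\,d\lambda ,
\]
$E_H$ being the spectral measure of $H$. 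As $\int_\RR \varepsilon\big((t-\lambda)^2+\varepsilon^2\big)^{-1}d\lambda=\pi$, one gets $0\le R_\varepsilon\le\pi$ uniformly in $\varepsilon$. Hence $f\mapsto\langle f,R_\varepsilon f\rangle^{1/2}$ is a seminorm bounded by $\sqrt\pi\,\|f\|$, and from $\langle f,R_\varepsilon f\rangle^{1/2}\le\langle g,R_\varepsilon g\rangle^{1/2}+\sqrt\pi\,\|f-g\|$ it suffices to prove \eqref{eq-rrr} on a dense subset of $\HH$.

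To produce such a subset I would split $\HH=\HH_0'\oplus\HH_1'$ exactly as in \eqref{eq-h01} but with the Krein field $\gamma$ replaced by $\nu$, so that $\HH_1'$ is the closure of the linear hull of $\{\nu(z)\psi:\Im z\ne0,\ \psi\in\ell^2(\ZZ^{d})\}$ and $\HH_0':=(\HH_1')^\perp$. For $f\in\HH_0'$ one has $\nu^*(\bar z)f=0$, whence by \eqref{eq-nz} $(H-z)^{-1}f=(H_0-z)^{-1}f$; since $[a,b]\subset\RR\setminus\spec H_0$, the resolvent $(H_0-\lambda-i\varepsilon)^{-1}$ is uniformly bounded for $\lambda\in\widetilde\Delta_\delta$ and small $\varepsilon$, so $\varepsilon\int_{\widetilde\Delta_\delta}\|(H-\lambda-i\varepsilon)^{-1}f\|^2\,d\lambda=O(\varepsilon)\to0$.

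On $\HH_1'$ I would work on the dense family $f=\nu(\zeta)\big(N(\zeta)-B\big)^{-1}\xi$ with $\Im\zeta\ne0$ and $\xi\in\ell^2(\ZZ^{d})$ of finite support. As in the proof of proposition~\ref{spec2} (see section~3 in~\cite{BGP}), the Krein field identity for the triple $(\GG',\Pi,\Pi')$ gives
\[
(H-\lambda-i\varepsilon)^{-1}f=\frac{1}{\zeta-\lambda-i\varepsilon}\Big(f-\nu(\lambda+i\varepsilon)\big(N(\lambda+i\varepsilon)-B\big)^{-1}\xi\Big).
\]
Here $|\zeta-\lambda-i\varepsilon|\ge|\Im\zeta|>0$, the field $\nu$ is analytic and hence uniformly bounded on the strip $Z$, and—because $\xi$ has finite support—lemma~\ref{lem-fin} yields $\big\|(N(\lambda+i\varepsilon)-B)^{-1}\xi\big\|\le\sum_{\vc n}|\xi(\vc n)|\,C_{\vc n}<\infty$ uniformly for $\lambda\in\widetilde\Delta_\delta$ and $\varepsilon\in(0,\varepsilon_0)$. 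Thus $\|(H-\lambda-i\varepsilon)^{-1}f\|\le C_f$ uniformly on $\widetilde\Delta_\delta\times(0,\varepsilon_0)$, and $\varepsilon\int_{\widetilde\Delta_\delta}\|(H-\lambda-i\varepsilon)^{-1}f\|^2\,d\lambda\le C_f^2\,|\Delta|\,\varepsilon\to0$. Since sums of vectors from $\HH_0'$ and from this family are dense in $\HH$, the reduction of the first paragraph finishes the argument.

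The main obstacle is that lemma~\ref{lem-fin} controls $(N(z)-B)^{-1}$ only on the individual basis vectors $\delta_{\vc n}$, with constants $C_{\vc n}$ that may grow with $\vc n$; this is precisely what forces the restriction to finitely supported $\xi$, so that the bound stays a finite sum, and makes the a priori estimate $R_\varepsilon\le\pi$ indispensable for transferring \eqref{eq-rrr} from this dense family to an arbitrary $f\in\HH$.
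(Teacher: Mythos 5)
Your proof is correct and follows essentially the same route as the paper: the decomposition of $\HH$ via the Krein field $\nu$, the reduction of the $\HH_0'$ part to the resolvent of $H_0$, and the use of the resolvent formula \eqref{eq-nz} together with lemma~\ref{lem-fin} on the dense family $\nu(\zeta)\big(N(\zeta)-B\big)^{-1}\xi$. The only difference is that you make explicit the a priori bound $0\le R_\varepsilon\le\pi$ justifying the passage from the dense set to all of $\HH$, a step the paper leaves implicit.
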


\begin{proof}
Here we are going to use proposition~\ref{krein}. First note that due to $\widetilde \Delta_\delta\subset\RR\setminus\spec H_0$ one has
\begin{equation}
     \label{eq-h0}
\lim_{\varepsilon\to 0} \varepsilon \int_{\widetilde\Delta_\delta} \|(H_0-\lambda-i\varepsilon)^{-1}f\|^2d\lambda=0
\text{ for any } f\in\HH.
\end{equation}
Similar to \eqref{eq-h01} let us consider the decomposition
\[
\HH=\HH_0+\HH_1,\quad
\HH_0:=\Big(\bigcup_{\Im z \ne 0} \nu(z)\big(\ell^2(\ZZ^d)\big)\Big)^\perp,
\quad
\HH_1:=\HH_0^\perp;
\]
As previously,
by~\eqref{eq-nz}, for any $f\in\HH_0$ and any $z$ with $\Im z\ne 0$ one has $\nu^*(z)f=0$.
Hence, by~\eqref{eq-krein}, there holds $(H-z)^{-1}f=(H_0-z)^{-1}à$, and \eqref{eq-h0}
implies \eqref{eq-rrr} for $f\in\HH_0$.

Now it is sufficient to show \eqref{eq-h0} for vectors $f=\nu(\zeta)h$
for $h=\big(N(\zeta)-B\big)^{-1}\delta_{\vc m}$, $\vc m\in\ZZ^d$,
$\Im\zeta\ne0$. The operators $\big(N(\zeta)-B\big)^{-1}$ have dense range
(coinciding with $\dom B$),
hence the linear hull of such vectors $f$ is dense in $\HH_1$.
By elementary calculations (see e.g. section~3 in~\cite{BGP}) one rewrites Eq.~\eqref{eq-krein} as
\begin{equation}
   \label{eq-res}
(H-\lambda-i\varepsilon)^{-1}f=
\dfrac{1}{\zeta-\lambda-i\varepsilon}\,\Big(
f-\nu(\lambda+i\varepsilon) \big(N(\lambda+i\varepsilon)-B\big)^{-1}\delta_{\vc m}\Big).
\end{equation}
Due to lemma~\ref{lem-fin} we have $\big\|\big(N(\lambda+i\varepsilon)-B\big)^{-1}\delta_{\vc m}\big\|\le C$ with some $C>0$,
for all $\lambda\in\widetilde\Delta_\delta$ and sufficiently small $\varepsilon$,
and \eqref{eq-res} implies
\[
\|(H-\lambda-i\varepsilon)^{-1}f\|\le \dfrac{\|f\|+C\|\nu(\lambda+i\varepsilon)\|}{|\zeta-\lambda-i\varepsilon|},
\]
and due to the analyticity of $\gamma$, one can estimate $\|(H-\lambda-i\varepsilon)^{-1}f\|\le C'$
with some $C'>0$ for all $\lambda\in\widetilde\Delta_\delta$ and sufficiently small $\varepsilon$.
This obviously implies \eqref{eq-rrr}.  
\end{proof}

\begin{prop}\label{spec4}
The spectrum of $H$ outside $\spec L$ is pure point.
\end{prop}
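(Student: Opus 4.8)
The plan is to show that the spectral measure of $H$ assigns no mass to the complement of the eigenvalue set inside $\RR\setminus\spec L$. Since $\spec H_0=\spec L\cup\spec H^0$ with $\spec H^0$ discrete, the set $\RR\setminus\spec L$ is the union of $\RR\setminus\spec H_0$ and the discrete set $\spec H^0\setminus\spec L$; any spectral measure is automatically pure point on a countable set, so it suffices to treat an arbitrary interval $[a,b]\subset\RR\setminus\spec H_0$ and an arbitrary subinterval $\Delta\subset[a,b]$ with non-eigenvalue endpoints. By proposition~\ref{prop-ev} the eigenvalues are dense and $\bigcap_{\delta>0}\Delta_\delta$ is exactly the eigenvalue set, so $\widetilde\Delta_\delta$ increases to $\Delta\setminus\bigcup_{\vc m}\{\lambda(\vc m)\}$ as $\delta\downarrow0$. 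Denoting by $E$ the spectral projection of $H$, continuity from below gives $E\big(\Delta\setminus\bigcup_{\vc m}\{\lambda(\vc m)\}\big)=\lim_{\delta\downarrow0}E(\widetilde\Delta_\delta)$, so the whole statement reduces to proving $E(\widetilde\Delta_\delta)=0$ for each fixed $\delta>0$.

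To establish $E(\widetilde\Delta_\delta)=0$ I would show that the closed kernel of this bounded projection is all of $\HH$, using the orthogonal splitting $\HH=\HH_0\oplus\HH_1$ from the proof of lemma~\ref{lem-fff}. For $f\in\HH_0$ one has $(H-z)^{-1}f=(H_0-z)^{-1}f$ for non-real $z$, whence the spectral measures of $f$ for $H$ and $H_0$ coincide; as $\widetilde\Delta_\delta$ lies in the resolvent set of $H_0$, this already gives $E(\widetilde\Delta_\delta)f=0$. For the family $f=\nu(\zeta)h$ with $h=\big(N(\zeta)-B\big)^{-1}\delta_{\vc m}$, whose linear hull is dense in $\HH_1$, the proof of lemma~\ref{lem-fff} yields the \emph{pointwise} bound $\|(H-\lambda-i\varepsilon)^{-1}f\|\le C$, uniform in $\lambda\in\widetilde\Delta_\delta$ and small $\varepsilon$. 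The identity $\varepsilon\|(H-\lambda-i\varepsilon)^{-1}f\|^2=\Im\langle f,(H-\lambda-i\varepsilon)^{-1}f\rangle$ then shows that the imaginary part of the Borel transform of the spectral measure $\mu_f$ obeys $\Im\langle f,(H-\lambda-i\varepsilon)^{-1}f\rangle\le C^2\varepsilon\to0$ at \emph{every} point $\lambda\in\widetilde\Delta_\delta$.

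I would finish by invoking the boundary-value dichotomy for the Borel transform of a finite measure: its singular part is carried by the set where $\Im\langle f,(H-\lambda-i\varepsilon)^{-1}f\rangle\to\infty$, while its absolutely continuous density equals $\pi^{-1}\lim_{\varepsilon\to0}\Im\langle f,(H-\lambda-i\varepsilon)^{-1}f\rangle$ for Lebesgue-almost every $\lambda$. Since on $\widetilde\Delta_\delta$ this quantity stays bounded (killing the singular part) and in fact tends to $0$ everywhere (killing the a.c.\ part), and since $\widetilde\Delta_\delta$ contains no eigenvalue, we get $\mu_f(\widetilde\Delta_\delta)=\|E(\widetilde\Delta_\delta)f\|^2=0$. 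Thus $E(\widetilde\Delta_\delta)$ annihilates $\HH_0$ and a dense subset of $\HH_1$; its kernel being closed, it vanishes. Letting $\delta\downarrow0$ makes $E|_\Delta$ pure point, and exhausting $[a,b]$ and then $\RR\setminus\spec H_0$ completes the argument.

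The main obstacle is exactly this last measure-theoretic step. Because the eigenvalues are dense, $\widetilde\Delta_\delta$ is a nowhere-dense set of positive Lebesgue measure (a fat Cantor-type set), so Stieltjes inversion over intervals is unavailable, and one cannot simply read off $\mu_f(\widetilde\Delta_\delta)$ from an integrated resolvent estimate. It is therefore essential to use the \emph{pointwise-in-$\lambda$} resolvent bound from lemma~\ref{lem-fff} rather than its integrated conclusion, since the integrated estimate alone controls the absolutely continuous part but not the singular continuous part on a set of this type; the pointwise bound, fed into the boundary-value theory of Poisson integrals, disposes of both at once.
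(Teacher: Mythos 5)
Your proposal is correct, and it follows the paper's skeleton up to the last step: the same reduction to an interval $\Delta\subset\RR\setminus\spec H_0$, the same splitting $\HH=\HH_0\oplus\HH_1$ with $(H-z)^{-1}f=(H_0-z)^{-1}f$ on $\HH_0$ and the dense family $f=\nu(\zeta)\big(N(\zeta)-B\big)^{-1}\delta_{\vc m}$ in $\HH_1$, the same resolvent bound on $\widetilde\Delta_\delta$ coming from Lemma~\ref{lem-fin}, and the same limit $\delta\downarrow 0$ using $\bigcap_\delta\Delta_\delta=\bigcup_{\vc m}\{\lambda(\vc m)\}$. Where you genuinely diverge is in how the resolvent bound is converted into $\mu_f(\widetilde\Delta_\delta)=0$. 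The paper integrates the bound (Lemma~\ref{lem-fff}) and feeds it into the Stone formula applied to the decomposition $\Delta=\Delta_\delta\cup\widetilde\Delta_\delta$, treating $\Delta_\delta$ as a countable union of intervals; you instead keep the \emph{pointwise} estimate $\|(H-\lambda-i\varepsilon)^{-1}f\|\le C$ from the proof of Lemma~\ref{lem-fff}, pass via $\varepsilon\|(H-\lambda-i\varepsilon)^{-1}f\|^2=\Im\langle f,(H-\lambda-i\varepsilon)^{-1}f\rangle\le C^2\varepsilon$ to a pointwise bound on the Herglotz function of $\mu_f$, and invoke the de la Vall\'ee Poussin/Fatou boundary-value criteria to kill both the singular and the absolutely continuous parts on $\widetilde\Delta_\delta$. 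Your route buys genuine robustness here: since the eigenvalues are dense, $\overline{\Delta_\delta}=\overline{\Delta}$ and $\widetilde\Delta_\delta$ is a nowhere dense set of positive measure, so the identification $\lim_{\varepsilon\to0+}\frac{\varepsilon}{\pi}\int_{\Delta_\delta}\|(H-\lambda-i\varepsilon)^{-1}f\|^2\,d\lambda=\mu_f(\Delta_\delta)$ used in the paper is delicate (an interchange of limit and infinite sum over intervals whose boundary carries exactly the mass in question), whereas the pointwise Herglotz criterion settles the singular continuous part directly; the price is that you must quote the boundary-value dichotomy for Borel transforms rather than only the Stone formula. Two further points in your favour: you explicitly dispose of the countable set $\spec H^0\setminus\spec L$, which the paper's proof (formulated only on $\RR\setminus\spec H_0$) leaves implicit, and you correctly note that $E(\widetilde\Delta_\delta)$, vanishing on $\HH_0$ and on a dense subset of $\HH_1$, vanishes identically because it is a bounded projection.
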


\begin{proof}We are going to show that for any $f\in\HH$ and any interval $\Delta\subset\RR\setminus\spec H_0$
the spectral measure $\mu_f$ associated with $H$ and $f$ satisfies
$\mu_f(\Delta)=\mu_f\big(\Delta\cap\bigcup_m\{\lambda(m)\}\big)$; this proves that all
the spectral measures are pure point.

By the Stone formula, for any set $X$ which is a countable union of intervals
whose ends are not eigenvalues of $H$ one has
\[
\mu_f(X)=\lim_{\varepsilon\to0+}\dfrac{\varepsilon}{\pi}\int_X\|(H-\lambda-i\varepsilon)f\|^2d\lambda.
\]
Using lemma~\ref{lem-fff}, for any $\delta>0$ we estimate
\begin{multline*}
\mu_f(\Delta)=\lim_{\varepsilon\to0+}\dfrac{\varepsilon}{\pi}\int_\Delta\|(H-\lambda-i\varepsilon)f\|^2d\lambda\\
=\lim_{\varepsilon\to0+}\dfrac{\varepsilon}{\pi}\int_{\Delta_\delta}\|(H-\lambda-i\varepsilon)f\|^2d\lambda
+\lim_{\varepsilon\to0+}\dfrac{\varepsilon}{\pi}\int_{\widetilde\Delta_\delta}\|(H-\lambda-i\varepsilon)f\|^2d\lambda\\
=\lim_{\varepsilon\to0+}\dfrac{\varepsilon}{\pi}\int_{\Delta_\delta}\|(H-\lambda-i\varepsilon)f\|^2d\lambda=\mu_f(\Delta_\delta).
\end{multline*}
As $\delta$ is arbitrary and $\bigcap_{\delta>0}\Delta_\delta=\bigcup_m\{\lambda(m)\}$, the theorem is proved.
\end{proof}

\section*{Acknowledgments}

The work was supported by the Marie Curie Intra-European Fellowship
PIEF-GA-2008-219641 during the stay at the University Paris Nord in July-Septembre 2008.

\end{document}